%% file: neurips_2026.tex
\documentclass{article}

\usepackage[main, final]{neurips_2026}

\usepackage[utf8]{inputenc}
\usepackage[T1]{fontenc}
\usepackage{hyperref}
\usepackage{url}
\usepackage{booktabs}
\usepackage{amsfonts}
\usepackage{nicefrac}
\usepackage{microtype}
\usepackage{xcolor}

\usepackage{amssymb}
\usepackage{amsmath}
\usepackage{amsthm}
\usepackage{graphicx}
\usepackage{multirow}
\usepackage{array}
\usepackage{adjustbox}
\usepackage{pifont}
\usepackage{enumitem}
\usepackage{algorithm}
\usepackage{algorithmic}
\usepackage{wrapfig}

\newtheorem{theorem}{Theorem}
\newtheorem{lemma}{Lemma}
\newtheorem{assumption}{Assumption}
\newtheorem{definition}{Definition}
\theoremstyle{remark}
\newtheorem*{remark}{Remark}

\title{Aligning Shared and Routed Experts for Cross-Subject EEG Generalization}

\author{
  Zhi Zhang$^{1}$ \quad
  Yan Liu$^{1}$\thanks{Corresponding author.} \quad
  Zhejing Hu$^{1}$ \quad
  Gong Chen$^{2}$ \quad
  Sheng-hua Zhong$^{3}$ \\
  \bfseries Changhong Jing$^{1,4}$ \quad
  Shuqiang Wang$^{4}$ \quad
  Jibin Wu$^{1}$ \quad
  KC Tan$^{1}$ \quad
  Jiannong Cao$^{1}$ \\[6pt]
  \mdseries $^{1}$The Hong Kong Polytechnic University, Hong Kong, China \\
  $^{2}$Biaohan Limited, Shenzhen, China \quad
  $^{3}$Shenzhen University, Shenzhen, China \\
  $^{4}$Shenzhen Institutes of Advanced Technology, Chinese Academy of Sciences, Shenzhen, China \\[3pt]
  \texttt{\{zhi271.zhang, zhejing.hu, changhong.jing\}@connect.polyu.hk} \\
  \texttt{\{yan.liu, jibin.wu, kaychen.tan, jiannong.cao\}@polyu.edu.hk} \\
  \texttt{heinz.g.chen@hotmail.com, csshzhong@szu.edu.cn, sq.wang@siat.ac.cn}
}

\begin{document}

\maketitle

\begin{abstract}
Cross-subject EEG generalization is challenging due to substantial heterogeneity across subjects. Existing methods typically learn either a shared subject-invariant model or multiple subject-specialized experts, but these two paradigms fail in complementary ways: the former may over-reduce subject-specific discriminative signals, while the latter may under-reduce transferable structure. We show that their suitability depends on the reducibility cost of branch-specific functions to branch-invariant ones, and we further provide a theory-to-method mapping that instantiates alignment principles in cross-subject EEG learning. Based on this insight, we propose Shared-Routed Expert Alignment (SREA), a collaborative framework that couples a shared expert for reducible invariant functions with routed experts for irreducible subject-specific functions. SREA trains the shared branch with joint embedding over augmented temporal neighbors, the routed branch with prototype-based sparse routing and expert specialization, and both branches with numerically stable mutual-guided reweighting based on cross-branch learnability gaps. Experiments on seven public EEG benchmarks across different tasks show that SREA consistently outperforms state-of-the-art methods and EEG foundation models.
\end{abstract}

\section{Introduction}

Electroencephalography (EEG) decoding plays an important role in brain-computer interfaces, computational neuroscience, and neurophysiological assessment~\citep{schalk2004bci2000}. Recent advances in deep learning have substantially improved EEG modeling across tasks such as auditory attention decoding, motor imagery classification, and sleep staging~\citep{schirrmeister2017deep,lawhern2018eegnet,song2022eeg,fan2025listennet}. More recently, large-scale pretraining has led to EEG foundation models, such as LaBraM, EEGPT, CBraMod, and CSBrain, which learn transferable representations from multi-task and multi-dataset corpora~\citep{jianglarge,wang2024eegpt,wangcbramod,zhou2025csbrain}. These models have significantly strengthened generic EEG representation learning and improved transfer across tasks and datasets. However, despite these advances, cross-subject generalization remains a major bottleneck. Models trained on a set of source subjects often degrade markedly when transferred to unseen target subjects~\citep{handiru2016optimized,wang2024dmmr}.

Cross-subject generalization has long been a central challenge in EEG decoding, because neural signals vary substantially across individuals due to differences in brain anatomy, physiology, recording conditions, and behavioral strategies~\citep{handiru2016optimized,wang2024dmmr,bomatter2025limited}. Earlier efforts addressed this problem through handcrafted features, channel selection, and shallow transfer techniques~\citep{handiru2016optimized}. Existing approaches to cross-subject EEG learning largely follow two paradigms. The first learns a shared subject-invariant model, typically by encouraging representations to remove subject-specific variations so that a single classifier can transfer across subjects~\citep{li2018cross,ma2019reducing,shen2022contrastive,wang2024generalizable,wang2024dmmr}. This strategy is effective when the discriminative structure is largely shared, but it can become problematic when subject heterogeneity is strong: forcing all subjects into a single invariant representation may suppress subject-specific but task-relevant cues, leading to over-reduction. The second paradigm uses multiple specialized experts, such as ensembles or mixture-of-experts architectures, to preserve heterogeneous decision patterns across subjects or subject groups~\citep{zhang2024evolutionary,wang2024ensemble,zhao2024ensemble,lisparse}. While this strategy can better capture local specialization, it may rely too heavily on branch-specific structure, fail to extract transferable regularities, and introduce routing uncertainty on unseen subjects, resulting in under-reduction.

In this paper, we formalize cross-subject EEG generalization using algorithmic alignment under distribution shift~\citep{xucan}. Our theoretical analysis shows that a shared expert is better aligned with branch-invariant functions, whereas routed experts are better aligned with branch-dependent functions. More importantly, their relative suitability depends on the reducibility cost of transforming branch-specific functions into branch-invariant ones. Based on this insight, we propose Shared-Routed Expert Alignment (SREA), a collaborative framework that couples a shared branch for reducible invariant structure with routed experts for irreducible subject-specific structure, and bridges them via mutual-guided reweighting driven by cross-branch learnability gaps. Unlike the first paradigm, SREA does not assume all useful variation should be absorbed into one common representation; unlike the second, it does not rely solely on independent expert aggregation. Instead, SREA lets the two branches explicitly supervise each other to dynamically generalize on different cases.

In summary, our contributions are as follows:
\begin{itemize}[leftmargin=*,nosep]
\item We analyze cross-subject EEG generalization from an algorithmic alignment perspective and show that the suitability of shared and routed experts depends on the reducibility cost from branch-specific to branch-invariant functions.
\item We propose SREA, a theory-driven framework that aligns a shared expert with reducible invariant structure and routed experts with irreducible subject-specific structure.
\item We introduce a novel mutual-guided reweighting mechanism that transfers supervision across branches according to learnability gaps, thereby mitigating over-reduction and under-reduction.
\end{itemize}

\section{Theoretical Analysis}
\label{sec:preliminary}

In this section, we formalize cross-subject EEG generalization through algorithmic alignment~\citep{xucan}, aiming to characterize how network architecture influences generalization under subject-induced distribution shift.

\subsection{Architecture Alignment Impacts Cross-Subject Generalization}
\label{sec:thm1}

\begin{definition}
\label{def:alignment}
Let $\mathcal{N}$ denote a neural network composed of $n$ modules $\{\mathcal{N}_i\}_{i=1}^n$, and assume that a target function $g$ can be decomposed into $n$ sub-functions such that $g = f_n \circ \cdots \circ f_1$. We say that $\mathcal{N}$ \emph{aligns} with $g$ if replacing each module $\mathcal{N}_i$ with $f_i$ yields the same output as $g$. The alignment measure is defined as $\mathrm{Align}(\mathcal{N}, g, \epsilon, \delta) := n \cdot \max_{i \in [n]} \mathcal{M}(f_i, \mathcal{N}_i, \epsilon, \delta)$, where $\mathcal{M}(f_i, \mathcal{N}_i, \epsilon, \delta)$ denotes the sample complexity for $\mathcal{N}_i$ to learn $f_i$ with precision $\epsilon$ and failure probability $\delta$.
\end{definition}

Definition \ref{def:alignment} shows that a well-aligned architecture decomposes the problem into simpler sub-problems, reducing sample complexity. We now extend algorithmic alignment~\citep{xucan,lisparse} to domain generalization under distribution shift.

\begin{assumption}
\label{ass:distribution_shift}
Let $p_{\mathrm{tr}}$ and $p_{\mathrm{te}}$ denote the training and test feature distributions at the output of $\mathcal{N}_1$. There exist constants $C \geq 1$ and $\eta \geq 0$ such that for any measurable set $A$, $\mathbb{P}_{D_{\mathrm{te}}}[\mathcal{N}_1(\boldsymbol{x}) \in A] \leq C \cdot \mathbb{P}_{D_{\mathrm{tr}}}[\mathcal{N}_1(\boldsymbol{x}) \in A] + \eta$.
\end{assumption}

\begin{remark}
Subject-invariant physiological patterns ensure a bounded density ratio $C$, while training on diverse subjects yields a reasonable transfer error $\eta$~\citep{berry2017aasm,bomatter2025limited}.
\end{remark}

\begin{assumption}
\label{ass:invariant}
There exists a function $g_c$ such that $g_c(\mathcal{N}_1(\boldsymbol{x})) = y$ for all $\boldsymbol{x} \in \mathcal{E}_{\mathrm{tr}}$, and $\mathbb{P}_{D_{\mathrm{te}}}[\|g_c(\mathcal{N}_1(\boldsymbol{x})) - y\| \leq \epsilon] \geq 1 - \delta$.
\end{assumption}

\begin{remark}
The function $g_c$ represents correlations that persist across both training and test distributions.
\end{remark}

\begin{assumption}
\label{ass:spurious}
There exists a function $g_s$ such that $g_s(\mathcal{N}_1(\boldsymbol{x})) = y$ for all $\boldsymbol{x} \in \mathcal{E}_{\mathrm{tr}}$, and $\mathbb{P}_{D_{\mathrm{te}}}[\|g_s(\mathcal{N}_1(\boldsymbol{x})) - y\| > \omega(\epsilon)] \geq 1 - \delta$.
\end{assumption}

\begin{remark}
The function $g_s$ represents correlations present in training but absent in test distributions.
\end{remark}

\begin{theorem}
\label{thm:backbone}
Let $\mathcal{N}' = \{\mathcal{N}_2, \ldots, \mathcal{N}_n\}$. Suppose we train the neural network with ERM and Assumptions~\ref{ass:distribution_shift}--\ref{ass:spurious} hold. Then:

\noindent (a) If $\mathrm{Align}(\mathcal{N}', g_c, \epsilon, \delta) \leq |\mathcal{E}_{\mathrm{tr}}|$, then $\mathbb{P}_{D_{\mathrm{te}}}[\|\mathcal{N}(\boldsymbol{x}) - y\| \leq O(\epsilon)] \geq 1 - O(\delta) - \eta$.

\noindent (b) If $\mathrm{Align}(\mathcal{N}', g_s, \epsilon, \delta) \leq |\mathcal{E}_{\mathrm{tr}}|$, then $\mathbb{P}_{D_{\mathrm{te}}}[\|\mathcal{N}(\boldsymbol{x}) - y\| > \omega(\epsilon)] \geq 1 - O(\delta) - \eta$.
\end{theorem}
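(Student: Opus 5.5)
The plan follows the template of the algorithmic-alignment generalization argument of \citet{xucan} and its distribution-shift extension in \citet{lisparse}. The proof has two stages: first control the error of the learned head $\mathcal{N}'$ on the \emph{training} feature distribution $p_{\mathrm{tr}}$, then transfer that control to $p_{\mathrm{te}}$ using Assumption~\ref{ass:distribution_shift}. Throughout, fix $\mathcal{N}_1$ and treat $\boldsymbol{z}=\mathcal{N}_1(\boldsymbol{x})$ as the input to the composite module $\mathcal{N}'=\mathcal{N}_n\circ\cdots\circ\mathcal{N}_2$.

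\textbf{Step 1 (in-distribution learning).} For part (a), decompose $g_c = f_n\circ\cdots\circ f_2$ as in Definition~\ref{def:alignment}, so that $\mathcal{N}'$ aligns with $g_c$. By Assumption~\ref{ass:invariant}, $g_c(\boldsymbol{z})=y$ on every training example, so ERM is fitting $g_c$. The hypothesis $\mathrm{Align}(\mathcal{N}',g_c,\epsilon,\delta)\le|\mathcal{E}_{\mathrm{tr}}|$ means each module receives at least $\mathcal{M}(f_i,\mathcal{N}_i,\epsilon,\delta)$ samples. We would then invoke the sequential-learning lemma of \citet{xucan}: each $\mathcal{N}_i$ learns $f_i$ to precision $\epsilon$ with failure probability $\delta$, and with Lipschitz modules the per-module errors compose additively. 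This gives
\[
\mathbb{P}_{p_{\mathrm{tr}}}\bigl[\|\mathcal{N}'(\boldsymbol{z})-g_c(\boldsymbol{z})\|>O(\epsilon)\bigr]\le O(\delta),
\]
where the $n$ factor in Align absorbs the union bound over modules.

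\textbf{Step 2 (transfer).} Let $A=\{\boldsymbol{z}:\|\mathcal{N}'(\boldsymbol{z})-g_c(\boldsymbol{z})\|>O(\epsilon)\}$, which is measurable. Assumption~\ref{ass:distribution_shift} gives $\mathbb{P}_{D_{\mathrm{te}}}[\boldsymbol{z}\in A]\le C\cdot O(\delta)+\eta=O(\delta)+\eta$, treating $C$ as a constant. The test-side guarantee of Assumption~\ref{ass:invariant} gives $\|g_c(\boldsymbol{z})-y\|\le\epsilon$ except with probability $\delta$. Combining these with the triangle inequality and a union bound yields $\|\mathcal{N}(\boldsymbol{x})-y\|\le O(\epsilon)$ with probability at least $1-O(\delta)-\eta$, which is part (a).

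Part (b) uses the same argument with $g_s$ in place of $g_c$. Steps 1--2 give $\|\mathcal{N}'(\boldsymbol{z})-g_s(\boldsymbol{z})\|\le O(\epsilon)$ with test probability at least $1-O(\delta)-\eta$. Assumption~\ref{ass:spurious} gives $\|g_s(\boldsymbol{z})-y\|>\omega(\epsilon)$ with probability at least $1-\delta$. The reverse triangle inequality then gives $\|\mathcal{N}(\boldsymbol{x})-y\|>\omega(\epsilon)-O(\epsilon)$, and this is $\omega(\epsilon)$ in the asymptotic sense because $\omega(\epsilon)$ dominates $O(\epsilon)$.

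\textbf{Main obstacle.} The hardest step is Step 1. ERM only guarantees interpolation of training labels, and both $g_c$ and $g_s$ interpolate them. So the claim that $\mathcal{N}'$ learns the specific function $g_c$ (resp.\ $g_s$) cannot come from ERM alone. It must come from the sample-complexity guarantee $\mathcal{M}$, read as a statement about what the aligned architecture converges to: the alignment hypothesis selects which interpolant is learned. I would make this explicit in the same way as \citet{xucan,lisparse}, by assuming per-module learnability with the intermediate targets $f_i$ and Lipschitz continuity of the modules, so that the errors compose. Under that reading, the remaining steps are routine measure-theoretic bookkeeping.
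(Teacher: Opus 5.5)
Your proposal is correct and follows essentially the same route as the paper: apply the i.i.d.\ alignment result (Theorem~\ref{thm:iid_alignment}) on the training feature distribution, transfer the failure event to the test distribution by applying Assumption~\ref{ass:distribution_shift} to its preimage (the paper's Lemma~\ref{lem:prob_transfer}), and finish with Assumption~\ref{ass:invariant} or~\ref{ass:spurious} and the (reverse) triangle inequality. The only difference is in your favor: you combine the two test-side events with a union bound, whereas the paper bounds the probability of their intersection below by the product $(1-\delta)(1-O(\delta)-\eta)$, which tacitly assumes independence; both give $1-O(\delta)-\eta$.
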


\begin{remark}
Networks aligned with invariant correlations generalize robustly, while those aligned with spurious correlations fail on unseen domains.
\end{remark}

\subsection{Reducibility Cost Impacts the Choice of Shared and Routed Architecture}
\label{sec:thm2}

\begin{definition}
\label{def:branching}
Let $g$ denote a target function with a branching structure $g(\boldsymbol{x}) = \sum_{j=1}^M \mathbf{1}_{I_j}(h_0(\boldsymbol{x})) \cdot h_j(\boldsymbol{x})$, where $h_0$ is a branching function that determines which branch to activate, $\{I_j\}_{j=1}^M$ are disjoint intervals partitioning the range of $h_0$, and $\{h_j\}_{j=1}^M$ are branch-dependent functions. A routed expert network $\mathcal{R}$ consists of $M$ routed experts $\{R_j\}_{j=1}^M$ and a router $R_0$, where each routed expert $R_j$ learns $h_j$ and the router $R_0$ learns $h_0$. A shared expert network $\mathcal{S}$ consists of a single shared expert $S_0$ that learns the entire function $g$.
\end{definition}

\begin{remark}
The branching structure captures branch-dependent processing strategies, which commonly arise under distribution shift due to subject heterogeneity.
\end{remark}

\begin{theorem}
\label{thm:shared_routed}
Let $\mathcal{M}(R_j, h_j, \epsilon, \delta)$ and $\mathcal{M}(R_0, h_0, \epsilon, \delta)$ denote the sample complexity for routed expert $R_j$ to learn $h_j$ and for router $R_0$ to learn $h_0$, respectively, with precision $\epsilon$ and failure probability $\delta$. Let $\mathcal{M}(S_0, g, \epsilon, \delta)$ denote the sample complexity for shared expert $S_0$ to learn $g$. Let $\alpha, \beta$ with $1 \leq \alpha \leq \beta$ satisfy $\alpha \cdot \max_j \mathcal{M}(R_j, h_j, \epsilon, \delta) \leq \mathcal{M}(S_0, g, \epsilon, \delta) \leq \beta \cdot \max_j \mathcal{M}(R_j, h_j, \epsilon, \delta)$. Suppose the branching function is no harder to learn than the branch-dependent functions, i.e., $\mathcal{M}(R_0, h_0, \epsilon, \delta) \leq \max_j \mathcal{M}(R_j, h_j, \epsilon, \delta)$. Then:

\noindent (a) If $\alpha \geq M + 1$, then $\mathrm{Align}(\mathcal{R}, g, \epsilon, \delta) \leq \mathrm{Align}(\mathcal{S}, g, \epsilon, \delta)$.

\noindent (b) If $\beta \leq M + 1$, then $\mathrm{Align}(\mathcal{R}, g, \epsilon, \delta) \geq \mathrm{Align}(\mathcal{S}, g, \epsilon, \delta)$.
\end{theorem}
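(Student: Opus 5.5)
The plan is to evaluate both alignment measures directly from Definition~\ref{def:alignment}, express each as a multiple of the common quantity $m^\ast := \max_j \mathcal{M}(R_j, h_j, \epsilon, \delta)$, and compare the two multiples in each case.

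\textbf{Routed network.} The routed network $\mathcal{R}$ has $n = M+1$ modules: the router $R_0$ and the experts $R_1,\dots,R_M$. Each is aligned with its sub-function $h_0, h_1,\dots,h_M$. The composition is the branching sum of Definition~\ref{def:branching}: the router's output selects the interval $I_j$, and the selected expert supplies $h_j(\boldsymbol{x})$. So substituting each module by its target sub-function reproduces $g$, and $\mathcal{R}$ aligns with $g$ in the sense of Definition~\ref{def:alignment}. I take the indicator-and-sum combination as a fixed, parameter-free operation that contributes no sample complexity. Then
\[
\mathrm{Align}(\mathcal{R}, g, \epsilon, \delta) = (M+1)\cdot \max\bigl\{\mathcal{M}(R_0,h_0,\epsilon,\delta),\, m^\ast\bigr\} = (M+1)\, m^\ast,
\]
where the last equality uses the hypothesis $\mathcal{M}(R_0, h_0, \epsilon, \delta) \le m^\ast$.

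\textbf{Shared network.} The shared network $\mathcal{S}$ has a single module $S_0$ learning all of $g$, so $n=1$ and $\mathrm{Align}(\mathcal{S}, g, \epsilon, \delta) = \mathcal{M}(S_0, g, \epsilon, \delta)$. The assumed sandwich bound then gives
\[
\alpha\, m^\ast \le \mathrm{Align}(\mathcal{S}, g, \epsilon, \delta) \le \beta\, m^\ast.
\]

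\textbf{Comparison.} For (a), if $\alpha \ge M+1$, then $\mathrm{Align}(\mathcal{R}) = (M+1)m^\ast \le \alpha m^\ast \le \mathrm{Align}(\mathcal{S})$. For (b), if $\beta \le M+1$, then $\mathrm{Align}(\mathcal{S}) \le \beta m^\ast \le (M+1)m^\ast = \mathrm{Align}(\mathcal{R})$. Both directions rely only on $m^\ast \ge 0$, which holds because it is a sample complexity.

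\textbf{Main obstacle.} The only delicate point is justifying the module count $n = M+1$ for $\mathcal{R}$, i.e.\ that the gating via $\mathbf{1}_{I_j}$ is part of the fixed architecture rather than an additional learned module. I would handle this by stating explicitly that the router's output is mapped to a one-hot selection over the intervals, and that this mapping is hard-wired. This is consistent with how \citet{lisparse} treat sparse mixture-of-experts routing.
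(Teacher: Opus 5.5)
Your proposal is correct and follows the paper's proof essentially step for step. Like the paper, you use the router hypothesis to get $\mathrm{Align}(\mathcal{R}, g, \epsilon, \delta) = (M+1)\max_j \mathcal{M}(R_j, h_j, \epsilon, \delta)$, set $\mathrm{Align}(\mathcal{S}, g, \epsilon, \delta) = \mathcal{M}(S_0, g, \epsilon, \delta)$, and compare via the $\alpha$/$\beta$ sandwich; your explicit remarks on $m^\ast \ge 0$ and on treating the indicator gating as a hard-wired, parameter-free operation (so there are $M+1$ modules) simply state assumptions the paper uses implicitly.
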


\begin{remark}
The ratio $\alpha$ (or $\beta$) reflects the reducibility cost of branch-specific functions to a branch-invariant one. When this cost is low (small $\beta$), the shared expert aligns better. When it is high (large $\alpha$), routed experts achieve better alignment.
\end{remark}

\subsection{Collaboration Adapts to Different Cases}
\label{sec:thm3}

\begin{assumption}
\label{ass:complementarity}
For any sample $\boldsymbol{x}$, at least one expert produces a prediction close to the target function, i.e., $\min\{\|S(\boldsymbol{x}) - g(\boldsymbol{x})\|, \|R(\boldsymbol{x}) - g(\boldsymbol{x})\|\} \leq \varepsilon$, where $\varepsilon$ is a small constant.
\end{assumption}

\begin{remark}
This follows from Theorem~\ref{thm:shared_routed}: at least one expert aligns well for each sample.
\end{remark}

\begin{theorem}
\label{thm:error_descent}
Define the alignment errors of the shared expert and routed experts at epoch $t$ as $\epsilon_S^{(t)} = \|S^{(t)}(\boldsymbol{x}) - g(\boldsymbol{x})\|^2$ and $\epsilon_R^{(t)} = \|R^{(t)}(\boldsymbol{x}) - g(\boldsymbol{x})\|^2$, respectively, where $g(\boldsymbol{x})$ denotes the target function. Under Assumption~\ref{ass:complementarity}, the total alignment error satisfies $\mathbb{E}[\epsilon_S^{(t)} + \epsilon_R^{(t)}] \leq \mathbb{E}[\epsilon_S^{(0)} + \epsilon_R^{(0)}]$.
\end{theorem}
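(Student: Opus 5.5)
We prove the claim by an explicit one-step analysis of the mutual-guided update, then chain the steps across epochs. We model the mutual-guided reweighting as a convex, sample-wise correction: at epoch $t$ each branch is pulled toward the other branch's prediction with a weight that grows with the cross-branch learnability gap,
\begin{align*}
S^{(t+1)}(\boldsymbol{x}) &= S^{(t)}(\boldsymbol{x}) + \lambda_S^{(t)}(\boldsymbol{x})\bigl(R^{(t)}(\boldsymbol{x}) - S^{(t)}(\boldsymbol{x})\bigr),\\
R^{(t+1)}(\boldsymbol{x}) &= R^{(t)}(\boldsymbol{x}) + \lambda_R^{(t)}(\boldsymbol{x})\bigl(S^{(t)}(\boldsymbol{x}) - R^{(t)}(\boldsymbol{x})\bigr),
\end{align*}
where $\lambda_S^{(t)},\lambda_R^{(t)}\in[0,1]$. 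We take $\lambda_S^{(t)}>0$ only when $\epsilon_S^{(t)}>\epsilon_R^{(t)}$, and symmetrically for $\lambda_R^{(t)}$, so that only the branch with the larger loss receives guidance. This is the numerically stable reweighting with its weights clipped to $[0,1]$. It then suffices to prove the per-sample, per-epoch inequality $\epsilon_S^{(t+1)}+\epsilon_R^{(t+1)}\le \epsilon_S^{(t)}+\epsilon_R^{(t)}$. Taking expectations over $\boldsymbol{x}$ and telescoping from $t$ down to $0$ then gives the theorem.

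\textbf{Key step (one-step descent).} Fix $\boldsymbol{x}$ and suppose without loss of generality that $\epsilon_S^{(t)}\ge\epsilon_R^{(t)}$. Then $\lambda_R^{(t)}=0$, so $\epsilon_R^{(t+1)}=\epsilon_R^{(t)}$. For the shared branch, write $e_S=S^{(t)}-g$ and $e_R=R^{(t)}-g$. The update gives $S^{(t+1)}-g=(1-\lambda)e_S+\lambda e_R$. By convexity of $\|\cdot\|^2$,
\[
\epsilon_S^{(t+1)}\le(1-\lambda)\epsilon_S^{(t)}+\lambda\epsilon_R^{(t)}\le\epsilon_S^{(t)},
\]
because $\epsilon_R^{(t)}\le\epsilon_S^{(t)}$. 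So the sum does not increase. Assumption~\ref{ass:complementarity} makes this quantitative. Since $\min\{\epsilon_S^{(t)},\epsilon_R^{(t)}\}\le\varepsilon^2$, the guiding branch is $\varepsilon$-accurate, and the decrease is at least $\lambda(\epsilon_S^{(t)}-\varepsilon^2)$. Hence the loss on the ill-aligned branch contracts strictly unless it is already at the $\varepsilon^2$ level.

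\textbf{Handling gradient training (main obstacle).} The real difficulty is that the branches are trained by ERM-style gradient steps, not by the idealized convex pull above. Each branch's own task-loss step could in principle increase its alignment error on some samples. I would handle this in two ways. First, assume the combined objective on each branch, namely its own loss plus the reweighted distillation toward the other branch, is $L$-smooth. Then take step size $\le 1/L$, so the descent lemma guarantees the objective does not increase in expectation. Second, relate that objective to $\epsilon_S+\epsilon_R$ using the complementarity bound, since the teacher is within $\varepsilon$ of $g$. If this relation only holds up to an $O(\varepsilon^2)$ slack, I would state the conclusion as non-increasing up to $O(\varepsilon^2)$. That slack vanishes under the idealized convex-combination model of the reweighting, which is the model the theorem implicitly adopts.
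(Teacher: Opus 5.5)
Your convexity step is the one the paper uses, but you have changed the dynamics in a way that removes the only nontrivial step. In SREA both branches are always guided. The weights in $\mathcal{L}_{R\leftarrow S}$ and $\mathcal{L}_{S\leftarrow R}$ are $1+\exp(\ell_R-\ell_S)$ and $1+\exp(\ell_S-\ell_R)$. Via Lemma~\ref{lem:weighted_kd}, the paper's proof turns them into the simultaneous function-space updates $S^{(t+1)}=S^{(t)}+\eta a_S(R^{(t)}-S^{(t)})$ and $R^{(t+1)}=R^{(t)}+\eta a_R(S^{(t)}-R^{(t)})$, where $a_S=\exp(\ell_S-\ell_R)$ and $a_R=\exp(\ell_R-\ell_S)$ are both strictly positive. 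Clipping such weights to $[0,1]$ never makes one of them zero. So setting $\lambda_R^{(t)}=0$ whenever $\epsilon_S^{(t)}\ge\epsilon_R^{(t)}$ is an additional modelling assumption, not the method's reweighting. It is also keyed to the alignment errors, which depend on the unknown $g$, rather than to the losses the weights are computed from. With $\lambda_R^{(t)}>0$, the better branch is pulled toward the worse one and $\epsilon_R$ can grow. Your argument gives no control over that growth.

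What is missing is a comparison of the two pulls. Keep both and sum your convexity bound, which is valid for $\eta a_S,\eta a_R\le 1$:
\[
\epsilon_S^{(t+1)}+\epsilon_R^{(t+1)}\le\epsilon_S^{(t)}+\epsilon_R^{(t)}-\eta\,(a_S-a_R)\,(\epsilon_S^{(t)}-\epsilon_R^{(t)}).
\]
The total is therefore non-increasing exactly when the weights are ordered like the errors. This is where the exponential reweighting does its work: $\ell_S>\ell_R$ gives $a_S>1>a_R$, so the worse branch contracts faster than the better one expands. It also requires that the branch with lower cross-entropy is the one with lower alignment error. The paper makes this identification implicitly when, in the case $\ell_S>\ell_R$, it invokes Assumption~\ref{ass:complementarity} to get $\epsilon_R^{(t)}\le\varepsilon^2$.

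Your second paragraph does not supply this step. Descent of each branch's training objective under $L$-smoothness does not by itself imply descent of $\epsilon_S+\epsilon_R$, and you never establish a link between the two. Conceding an $O(\varepsilon^2)$ slack would prove a weaker statement than the one claimed.
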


\begin{remark}
The expert with higher loss learns from the better-performing one, avoiding over-reduction and under-reduction.
\end{remark}

These results motivate a framework that aligns a shared branch with reducible invariant structure, a routed branch with irreducible subject-specific structure, and a collaborative mechanism that allows the two to correct each other.

\section{Methodology}

\begin{figure*}
    \begin{center}
    \includegraphics[width=\linewidth]{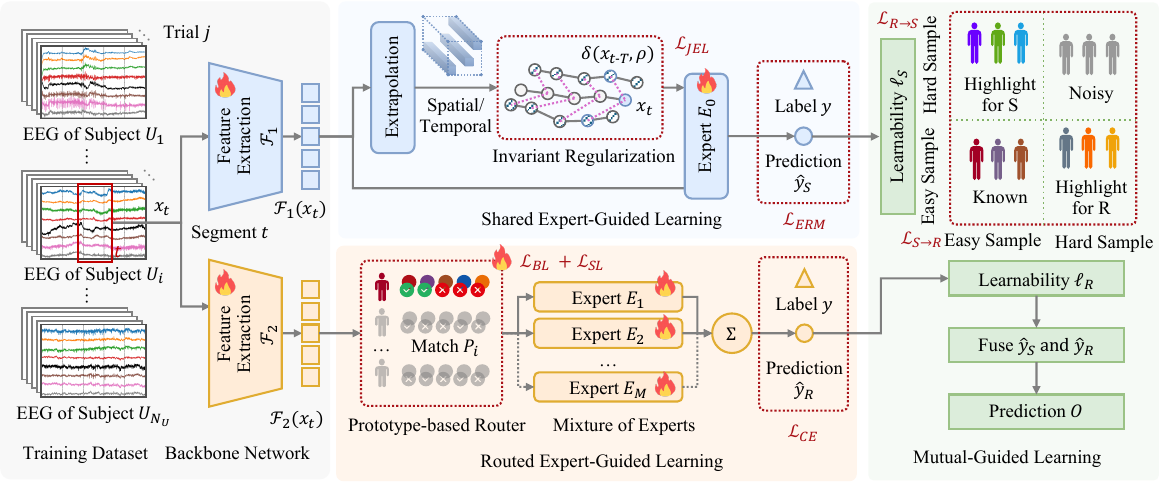}
    \end{center}
    \caption{Overview of the proposed Shared-Routed Expert Alignment (SREA) framework.}
    \label{fig:framework}
\end{figure*}

Fig.~\ref{fig:framework} illustrates the overall framework. Given an EEG segment $\boldsymbol{x}_t$ for subject $U_i$ in trial $j$, SREA extracts representations through two separate feature extractors: $\mathcal{F}_1$ for the shared branch and $\mathcal{F}_2$ for the routed branch. The shared branch trains a shared expert $E_0$ to model stable cross-subject structure using empirical risk minimization together with a joint embedding objective over augmented temporal neighbors. The routed branch uses a router $R$ with prototype-based sparse routing to dispatch samples to specialized routed experts $\{E_j\}_{j=1}^{M}$, while subject-level specialization and load balancing encourage consistent yet non-collapsed expert usage. Finally, the two branches are jointly optimized through mutual-guided reweighting, which emphasizes samples that are learned reliably by one branch but remain difficult for the other. During inference, the predictions of the shared and routed branches ($\hat{y}_S$ and $\hat{y}_R$) are fused to obtain the final output $O$.

\subsection{Shared Branch for Reducible Invariant Structure}
\label{sec:shared_expert_guided_learning}

To satisfy the requirement of Theorem~\ref{thm:shared_routed}, we define a shared expert $E_0$ that models the reducible invariant structure across subjects. Given a batch of EEG signals, we train $E_0$ by minimizing the empirical risk across all subjects using cross-entropy loss:
\begin{equation}
\mathcal{L}_{\text{ERM}} = -\frac{1}{N_B} \sum_{i=1}^{N_B} \sum_{c=1}^{C} y_{ic} \log(\hat{y}_{ic})
\end{equation}
where $N_B$ is the batch size, $C$ is the number of classes, and $y_{ic} \in \{0, 1\}$ is the ground-truth indicator for sample $i$ belonging to class $c$.

To bound the transfer error $\eta$ under distribution shift, we construct augmented temporal neighbors as a proxy for unseen samples whose distributional distance from training data remains controlled. For each EEG sample $\boldsymbol{x}_t \in \mathbb{R}^{E \times L}$, where $E$ is the number of electrodes and $L$ is the segment length, we retrieve its preceding consecutive $T$-second segment $\boldsymbol{x}_{t-T}$ of the same class. We then apply a stochastic spatial masking operation $\delta(\cdot, \rho)$ that randomly masks electrodes with probability $\rho$, yielding $\tilde{\boldsymbol{x}}_{t-T} = \delta(\boldsymbol{x}_{t-T}, \rho)$.

Let $\mathcal{F}_1(\cdot)$ denote the feature extractor of the shared branch. We obtain the representations $\boldsymbol{z}_{t-T} = \mathcal{F}_1(\tilde{\boldsymbol{x}}_{t-T})$ and $\boldsymbol{z}_t = \mathcal{F}_1(\boldsymbol{x}_t)$. The joint embedding loss is defined as:
\begin{equation}
\mathcal{L}_{\text{JEL}} = \frac{1}{N_B} \sum_{i=1}^{N_B} \left( 1 - \frac{\boldsymbol{z}_{i,t-T} \cdot \boldsymbol{z}_{i,t}}{\|\boldsymbol{z}_{i,t-T}\| \|\boldsymbol{z}_{i,t}\|} \right)
\end{equation}
which minimizes the cosine distance between the representations of the current segment and its spatially-masked temporal neighbor.

The total loss for shared expert-guided learning is:
\begin{equation}
\mathcal{L}_{S} = \mathcal{L}_{\text{ERM}} + \mathcal{L}_{\text{JEL}}
\end{equation}

\subsection{Routed Branch for Irreducible Subject-Specific Structure}
\label{sec:routed_expert_guided_learning}

To satisfy the requirement of Theorem~\ref{thm:shared_routed}, we define a routed expert network that captures irreducible subject-specific structure. The routed expert network consists of $M$ routed experts $\{E_j\}_{j=1}^{M}$ and a router $R$. Let $\boldsymbol{z} = \mathcal{F}_2(\boldsymbol{x}) \in \mathbb{R}^{d}$ denote the feature vector extracted from input $\boldsymbol{x}$ by the feature extractor $\mathcal{F}_2$. The output $\boldsymbol{o}$ is computed as a weighted combination of expert outputs:
\begin{equation}
\boldsymbol{o} = \sum_{j=1}^{M} r_{j} \cdot E_j(\boldsymbol{z})
\end{equation}
where $M$ is the number of routed experts, $E_j(\cdot)$ denotes the $j$-th routed expert, and $r_{j}$ represents the routing weight for expert $j$. The routing weights are sparse, with only the top-$K$ experts receiving non-zero weights.

We adopt a prototype-based routing mechanism. Given feature $\boldsymbol{z} \in \mathbb{R}^{d}$, we first project it to a gate dimension space via $W\boldsymbol{z} \in \mathbb{R}^{d_r}$, then compute the routing weights using cosine similarity with learned prototype embeddings $\boldsymbol{P} = [\boldsymbol{P}_1, \ldots, \boldsymbol{P}_{M}] \in \mathbb{R}^{d_r \times M}$:
\begin{equation}
r_{j} = \begin{cases}
\dfrac{\exp(\mathrm{sim}(\boldsymbol{P}_j, W\boldsymbol{z}))}{\sum_{l \in \mathcal{T}} \exp(\mathrm{sim}(\boldsymbol{P}_l, W\boldsymbol{z}))} & \text{if } j \in \mathcal{T} \\[2ex]
0 & \text{otherwise}
\end{cases}
\end{equation}
where $\mathrm{sim}(\boldsymbol{a}, \boldsymbol{b}) = \boldsymbol{a}^{\top} \boldsymbol{b} / (\|\boldsymbol{a}\| \cdot \|\boldsymbol{b}\|)$ denotes the cosine similarity, and $\mathcal{T}$ denotes the set of top-$K$ experts with the highest similarity scores. The prototype embedding matrix $\boldsymbol{P}$ can be interpreted as a codebook for neural activity patterns, and the cosine similarity acts as a matched filter for detecting these patterns. This design mitigates the issue where samples from similar subjects tend to cluster together, enabling each expert to specialize on its assigned cluster.

To optimize the selected experts' predictions toward the ground-truth labels, we employ the cross-entropy loss:
\begin{equation}
\mathcal{L}_{\text{CE}} = -\frac{1}{N_B} \sum_{i=1}^{N_B} \sum_{c=1}^{C} y_{ic} \log(\hat{y}_{ic})
\end{equation}
where $N_B$ is the batch size, $C$ is the number of classes, and $y_{ic} \in \{0, 1\}$ is the ground-truth indicator for sample $i$ belonging to class $c$.

To encourage each expert to specialize on specific subjects, we compute the average routing probability distribution across all samples from each subject, then minimize the entropy of this distribution:
\begin{equation}
\mathcal{L}_{\text{SL}} = \frac{1}{N_U} \sum_{k=1}^{N_U} H_k, \quad H_k = -\sum_{j=1}^{M} \bar{r}_{k,j} \log \bar{r}_{k,j}
\end{equation}
where $N_U$ is the number of unique subjects and $\bar{r}_{k,j} = \frac{1}{|\mathcal{D}_k|} \sum_{\boldsymbol{x} \in \mathcal{D}_k} r_j(\boldsymbol{x})$ is the average routing weight for expert $j$ across all samples from subject $k$, with $\mathcal{D}_k$ denoting the set of samples belonging to subject $k$. This loss encourages the router to consistently route samples from the same subject to the same expert.

To prevent routing collapse where the model consistently selects only a few experts while leaving others undertrained, we employ an auxiliary balance loss:
\begin{equation}
\mathcal{L}_{\text{BL}} = M \sum_{j=1}^{M} f_j \cdot \bar{r}_j
\end{equation}
where $f_j = \frac{1}{N_B} \sum_{i=1}^{N_B} \mathbb{I}[r_j(\boldsymbol{x}_i) > 0]$ denotes the fraction of samples routed to expert $j$ and $\bar{r}_j = \frac{1}{N_B} \sum_{i=1}^{N_B} r_j(\boldsymbol{x}_i)$ represents the average routing weight of expert $j$ over the batch. This loss penalizes configurations where a small subset of experts receives disproportionately high routing weights and selection frequencies.

The total loss for routed expert-guided learning is:
\begin{equation}
\mathcal{L}_{R} = \mathcal{L}_{\text{CE}} + \mathcal{L}_{\text{SL}} + \mathcal{L}_{\text{BL}}
\end{equation}

\subsection{Mutual-Guided Reweighting}
\label{sec:mutual_guided_training}

For a given sample $\boldsymbol{x}^{(i)}$ with ground-truth label $y^{(i)}$, let $\ell_S^{(i)} = -\log P_S(y^{(i)}|\boldsymbol{x}^{(i)})$ and $\ell_R^{(i)} = -\log P_R(y^{(i)}|\boldsymbol{x}^{(i)})$ denote the cross-entropy losses from the shared expert $E_0$ and the routed experts $\{E_j\}_{j=1}^{M}$, respectively. Following Theorem~\ref{thm:error_descent}, the mutual-guided reweighting mechanism couples the two branches by transferring supervision according to cross-branch learnability gaps. Specifically, we derive the reweighting formulas in Eq.~\eqref{eq:r_from_s} and Eq.~\eqref{eq:s_from_r} to ensure that the expert with higher loss learns from the better-performing one, thereby reducing the total alignment error.

When training the routed experts guided by the shared expert, we compute the weighted cross-entropy loss:
\begin{equation}
\label{eq:r_from_s}
\mathcal{L}_{R \leftarrow S} = \frac{1}{N_B} \sum_{i=1}^{N_B} (1+\exp(\ell_R^{(i)} - \ell_S^{(i)})) \cdot \ell_R^{(i)}
\end{equation}
Samples that the shared expert finds easier (lower $\ell_S^{(i)}$) but the routed experts find challenging (higher $\ell_R^{(i)}$) receive higher weights, constraining the routed experts from under-reduction with guidance from the shared expert.

Similarly, when training the shared expert guided by the routed experts:
\begin{equation}
\label{eq:s_from_r}
\mathcal{L}_{S \leftarrow R} = \frac{1}{N_B} \sum_{i=1}^{N_B} (1 + \exp(\ell_S^{(i)} - \ell_R^{(i)})) \cdot \ell_S^{(i)}
\end{equation}
which assigns higher weights to samples that the routed experts handle well, constraining the shared expert from over-reduction.

The total loss for mutual-guided learning combines both directions:
\begin{equation}
\mathcal{L}_{M} = \mathcal{L}_{R \leftarrow S} + \mathcal{L}_{S \leftarrow R}
\end{equation}

\subsection{Inference and Prediction Fusion}

During inference, the shared branch and the routed branch each produce a prediction for the input EEG sample. Let $\hat{y}_S$ and $\hat{y}_R$ denote the predicted class probabilities from the shared expert $E_0$ and the routed experts $\{E_j\}_{j=1}^{M}$, respectively. The final fused output is $O = \frac{1}{2}(\hat{y}_S + \hat{y}_R)$.

\section{Experiments}

\subsection{Experimental Settings}
\label{sec:experimental_setting}

We evaluate the proposed SREA on seven publicly available EEG datasets covering three representative tasks, including motor imagery (MI), sleep stage detection (SSD), and auditory attention decoding (AAD). Specifically, for MI, we use BCI IV-2a~\citep{tangermann2012review} and BCI IV-2b~\citep{leeb2007brain}. For SSD, we use Sleep-EDFx~\citep{goldberger2000physiobank} and ISRUC-3~\citep{khalighi2016isruc}. For AAD, we use the KUL dataset~\citep{das2016effect}, the DTU dataset~\citep{fuglsang2017noise}, and the AVED dataset~\citep{fan2024msfnet}.

We follow the standard experimental protocols used in the corresponding prior studies. For BCI IV-2a and BCI IV-2b, leave-one-subject-out cross-validation is employed~\citep{zhao2024ctnet}. For Sleep-EDFx, subject-wise 10-fold cross-validation is used, while leave-one-subject-out cross-validation is used on ISRUC-3~\citep{wang2024subject}. For the KUL, DTU, and AVED datasets, leave-one-subject-out cross-validation is adopted~\citep{fan2025listennet}. In leave-one-subject-out evaluation, one subject is held out for testing and the remaining subjects are used for training. In subject-wise k-fold cross-validation, one fold of subjects is used for testing and the remaining folds are used for training.

We train both the shared expert and the routed experts using the Adam optimizer. The batch size is set to 256, and the maximum number of training epochs is 100. Early stopping with a patience of 10 epochs is adopted to alleviate overfitting. The learning rate is set to $10^{-4}$, and the weight decay is set to $5 \times 10^{-4}$. Since different tasks involve distinct EEG topologies and temporal resolutions, we adopt task-specific backbones, i.e., DeepConvNet~\citep{schirrmeister2017deep} for MI, SleepWaveNet~\citep{wang2024subject} for SSD, and ListenNet~\citep{fan2025listennet} for AAD. Unless otherwise specified, all loss coefficients are set to 1. The number of routed experts is set to $M = 5$ and one expert is activated for each sample. The masking ratio is set to $\rho = 10\%$. For datasets with fewer than 10 electrodes, temporal masking is used instead of electrode masking.

\begin{table*}[t]
\caption{Comparison under subject-independent conditions. Accuracy (\%) is reported. $^\ast$ indicates significance (paired $t$-test, $p < 0.05$).}
\label{tab:comparison_experiments}
\begin{center}
\adjustbox{max width=\linewidth}{
\begin{tabular}{@{}lcc@{}}
\toprule
\multicolumn{3}{c}{(a) Motor Imagery} \\
\midrule
Model & IV-2a & IV-2b \\
\midrule
ShallowConvNet & 56.8$^\ast$ & 74.3$^\ast$ \\
DeepConvNet    & 60.2$^\ast$ & 75.2$^\ast$ \\
EEGNet         & 56.9$^\ast$ & 75.1$^\ast$ \\
Conformer      & 53.4$^\ast$ & 73.5$^\ast$ \\
CTNet          & 58.6$^\ast$ & 76.3$^\ast$ \\
\midrule
Proposed       & \textbf{61.3} & \textbf{77.5} \\
\bottomrule
\end{tabular}
\hspace{1em}
\begin{tabular}{@{}lcc@{}}
\toprule
\multicolumn{3}{c}{(b) Sleep Stage Detection} \\
\midrule
Model & EDFx & ISRUC \\
\midrule
DeepSleepNet    & 81.8$^\ast$ & 76.5$^\ast$ \\
Utime           & 80.6$^\ast$ & 73.5$^\ast$ \\
TinySleepNet    & 82.7$^\ast$ & 75.8$^\ast$ \\
SalientSleepNet & 82.9$^\ast$ & 76.9$^\ast$ \\
SleepWaveNet    & 83.2$^\ast$ & 79.2$^\ast$ \\
\midrule
Proposed        & \textbf{84.1} & \textbf{79.8} \\
\bottomrule
\end{tabular}
\hspace{1em}
\begin{tabular}{@{}lccc@{}}
\toprule
\multicolumn{4}{c}{(c) Auditory Attention Decoding} \\
\midrule
Model & KUL & DTU & AVED \\
\midrule
SSF-CNN   & 59.3$^\ast$ & 52.3$^\ast$ & 51.7$^\ast$ \\
MBSSFCC   & 62.7$^\ast$ & 52.5$^\ast$ & 52.2$^\ast$ \\
DBPNet    & 61.1$^\ast$ & 55.5$^\ast$ & 52.1$^\ast$ \\
DARNet    & 69.9$^\ast$ & 55.6$^\ast$ & 51.3$^\ast$ \\
ListenNet & 78.1$^\ast$ & 56.8$^\ast$ & 52.8$^\ast$ \\
\midrule
Proposed  & \textbf{81.7} & \textbf{58.3} & \textbf{53.9} \\
\bottomrule
\end{tabular}
}
\end{center}
\end{table*}

\subsection{Comparison with Task-Specific Methods}
\label{sec:comparison_experiments_task_specific}

We compare with several representative task-specific models on the motor imagery task using the BCI IV-2a and BCI IV-2b datasets. The compared methods include ShallowConvNet~\citep{schirrmeister2017deep}, DeepConvNet~\citep{schirrmeister2017deep}, EEGNet~\citep{lawhern2018eegnet}, Conformer~\citep{song2022eeg}, and CTNet~\citep{zhao2024ctnet}. The results in Table~\ref{tab:comparison_experiments} show that the proposed method achieves the highest accuracy, obtaining 61.3\% on BCI IV-2a and 77.5\% on BCI IV-2b. Notably, the gain is achieved under a highly challenging setting with only nine subjects available for training and testing, indicating that the proposed framework can effectively alleviate cross-subject discrepancy even when the number of training subjects is limited.

We then evaluate the proposed method on sleep stage detection using the Sleep-EDFx and ISRUC-3 datasets. We compare with strong baselines, including DeepSleepNet~\citep{supratak2017deepsleepnet}, Utime~\citep{perslev2019u}, TinySleepNet~\citep{supratak2020tinysleepnet}, SalientSleepNet~\citep{jia2021salientsleepnet}, and SleepWaveNet~\citep{wang2024subject}. As shown in Table~\ref{tab:comparison_experiments}, our method obtains the best performance on both datasets, achieving 84.1\% on Sleep-EDFx and 79.8\% on ISRUC-3. Although the improvements are relatively smaller than those on MI and AAD, the proposed framework still provides consistent gains. A possible reason is that SSD usually contains stronger subject-shared physiological patterns, such as sleep spindles and K-complexes, making the cross-subject gap less severe than in other EEG tasks.

Next, we evaluate auditory attention decoding. The compared methods include SSF-CNN~\citep{cai2021low}, MBSSFCC~\citep{jiang2022detecting}, DBPNet~\citep{ni2024dbpnet}, DARNet~\citep{yan2024darnet}, and ListenNet~\citep{fan2025listennet}. As reported in Table~\ref{tab:comparison_experiments}, the proposed method achieves the best performance on all three datasets, reaching 81.7\% on KUL, 58.3\% on DTU, and 53.9\% on AVED. These results verify the advantage of the proposed framework for cross-subject auditory attention decoding, even under short 1-second decision windows.

\subsection{Comparison with EEG Foundation Methods}
\label{sec:comparison_experiments_eeg_foundation}

We further evaluate the proposed framework in a multi-task multi-dataset setting under subject-independent protocols. Following prior work~\citep{zhou2025csbrain}, we report balanced accuracy and compare with recent EEG foundation models, including FFCL, ST-Trans, BIOT, LaBraM, and CBraMod. We use CBraMod as the backbone for our method. As shown in Table~\ref{tab:comparison_pretrained}, foundation models already provide strong cross-dataset generalization, while the proposed framework further improves the best-performing baseline across most datasets. This result suggests that SREA is not limited to lightweight task-specific architectures, but can also serve as a general collaboration paradigm that complements powerful pre-trained EEG representations.

\begin{table*}[t]
\caption{Comparison with EEG foundation models on multi-task multi-dataset scenarios under subject-independent conditions. Balanced accuracy (\%) is reported. $^\ast$ indicates significance (paired $t$-test, $p < 0.05$).}
\label{tab:comparison_pretrained}
\begin{center}
\begin{tabular}{@{}lcc@{}}
\toprule
\multicolumn{3}{c}{(a) Motor Imagery} \\
\midrule
Model & IV-2a & IV-2b \\
\midrule
FFCL       & 44.7$^\ast$ & 74.8$^\ast$ \\
ST-Trans   & 45.8$^\ast$ & 75.0$^\ast$ \\
BIOT       & 47.5$^\ast$ & 76.0$^\ast$ \\
LaBraM     & 48.7$^\ast$ & 76.4$^\ast$ \\
CbraMod    & 51.4$^\ast$ & 76.9$^\ast$ \\
\midrule
Proposed   & \textbf{56.2} & \textbf{77.6} \\
\bottomrule
\end{tabular}
\hspace{1em}
\begin{tabular}{@{}lcc@{}}
\toprule
\multicolumn{3}{c}{(b) Sleep Stage Detection} \\
\midrule
Model & EDFx & ISRUC \\
\midrule
FFCL       & 77.3$^\ast$ & 72.8$^\ast$ \\
ST-Trans   & 77.9$^\ast$ & 73.8$^\ast$ \\
BIOT       & 80.8$^\ast$ & 75.3$^\ast$ \\
LaBraM     & 81.7$^\ast$ & 76.3$^\ast$ \\
CbraMod    & 82.1$^\ast$ & 78.7$^\ast$ \\
\midrule
Proposed   & \textbf{83.2} & \textbf{79.6} \\
\bottomrule
\end{tabular}
\hspace{1em}
\begin{tabular}{@{}lccc@{}}
\toprule
\multicolumn{4}{c}{(c) Auditory Attention Decoding} \\
\midrule
Model & KUL & DTU & AVED \\
\midrule
FFCL       & 73.4$^\ast$ & 52.0$^\ast$ & 51.4$^\ast$ \\
ST-Trans   & 73.6$^\ast$ & 52.4$^\ast$ & 51.5$^\ast$ \\
BIOT       & 74.8$^\ast$ & 54.6$^\ast$ & 52.5$^\ast$ \\
LaBraM     & 75.5$^\ast$ & 55.3$^\ast$ & 52.9$^\ast$ \\
CbraMod    & 76.5$^\ast$ & 55.6$^\ast$ & 53.8$^\ast$ \\
\midrule
Proposed   & \textbf{79.8} & \textbf{57.0} & \textbf{53.9} \\
\bottomrule
\end{tabular}
\end{center}
\end{table*}

\subsection{Case Studies}

To further understand when the proposed framework is effective, we conduct case studies on the KUL dataset. We follow the same experimental settings as in Sec.~\ref{sec:comparison_experiments_task_specific} and report the subject-wise test accuracy of the full model, together with two ablated variants: using only the shared expert (w/o R) and using only the routed experts (w/o S). The results are shown in Fig.~\ref{fig:case_study}.

\begin{figure*}
    \begin{center}
    \includegraphics[width=\linewidth]{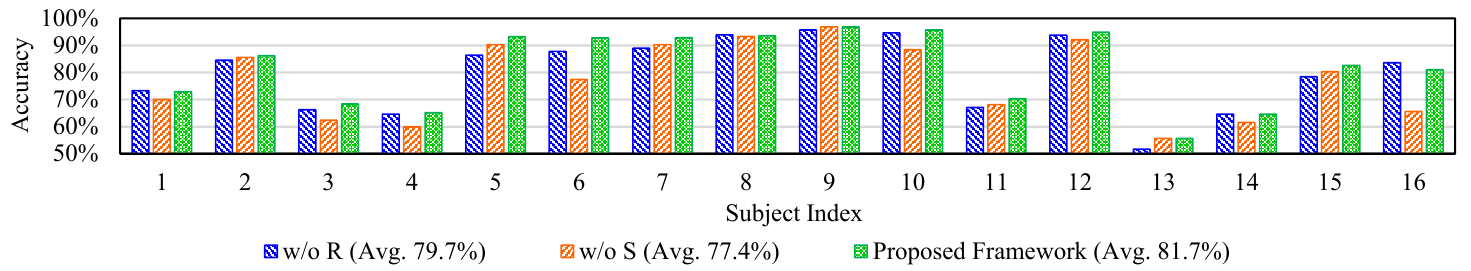}
    \end{center}
    \caption{Case studies of the proposed framework on the KUL dataset. The x-axis represents the subject index, and the y-axis represents the accuracy.}
    \label{fig:case_study}
\end{figure*}

It can be observed that the proposed framework outperforms the two ablated variants on nearly all subjects and achieves the best average accuracy overall. Specifically, the full model surpasses the shared-only and routed-only variants for all subjects except Subject 16. This result provides direct evidence that the two branches capture complementary information. The shared expert improves generalization by focusing on subject-invariant patterns, while the routed experts preserve subject-specific discriminative cues. Their collaboration therefore mitigates both over-reduction and under-reduction, leading to more reliable predictions across subjects.

\subsection{Ablation Study}

To validate the contribution of each component, we conduct systematic ablation experiments on the KUL and DTU datasets. We consider two levels of ablation: (1) branch-level variants that remove entire architectural components, and (2) component-level variants that disable individual loss terms within the full model. The results are reported in Table~\ref{tab:ablation}.

\begin{table}[t]
\caption{Ablation study on the KUL and DTU datasets. Average accuracy (\%) is reported under leave-one-subject-out cross-validation. \emph{w/o Routed}: shared branch only. \emph{w/o Shared}: routed branch only. \emph{w/o Mutual}: both branches trained independently and fused at inference. \emph{w/o JEL}: no joint embedding loss. \emph{w/o SL+BL}: no expert routing losses.}
\label{tab:ablation}
\begin{center}
\begin{tabular}{lcc@{\hspace{2em}}lcc}
\toprule
\multicolumn{3}{c}{(a) Branch-level} & \multicolumn{3}{c}{(b) Component-level} \\
\cmidrule(lr){1-3} \cmidrule(lr){4-6}
Variant & KUL & DTU & Variant & KUL & DTU \\
\midrule
Proposed   & \textbf{81.7} & \textbf{58.3} & Proposed  & \textbf{81.7} & \textbf{58.3} \\
w/o Routed & 79.4 & 57.2 & w/o JEL   & 81.1 & 57.9 \\
w/o Shared & 79.9 & 57.0 & w/o SL+BL & 81.0 & 57.7 \\
w/o Mutual & 80.8 & 57.8 &           &      &      \\
\bottomrule
\end{tabular}
\end{center}
\end{table}

Several observations can be made from Table~\ref{tab:ablation}. First, removing either branch causes the largest performance drops on both datasets, confirming that both the shared and routed branches are essential. On KUL, the routed branch contributes more, indicating that subject-specific expert specialization is important when more subjects are available for routing. On DTU, both branches contribute comparably, with the shared branch showing a slightly larger individual contribution. Second, the w/o Mutual variant shows that naive ensembling without mutual-guided reweighting consistently underperforms the full model, demonstrating that cross-branch collaboration during training provides gains beyond simple prediction fusion. Third, both component-level ablations show consistent drops, verifying that the joint embedding loss stabilizes shared representations and that the routing losses prevent expert collapse and encourage meaningful specialization.

\subsection{Empirical Verification of Theoretical Assumptions}

To validate that the assumptions underlying Theorems~\ref{thm:backbone}--\ref{thm:error_descent} hold on real EEG data, we conduct quantitative verification experiments on all three benchmark datasets. For each dataset, we train a single-fold subject-independent model and compute three metrics tied to the theoretical assumptions.

\begin{table}[t]
\caption{Empirical verification of theoretical assumptions on real EEG datasets. MMD is computed with multi-scale RBF kernels between training and test features. CKA measures the representational similarity between the shared and routed branches. The generalization gap is defined as the difference between training accuracy and test accuracy.}
\label{tab:theory_verification}
\begin{center}
\adjustbox{max width=\linewidth}{
\begin{tabular}{@{}lccc@{\hspace{2em}}lccc@{}}
\toprule
\multicolumn{4}{c}{(a) Assumption~\ref{ass:distribution_shift}: Bounded distribution shift} & \multicolumn{4}{c}{(b) Assumptions~\ref{ass:invariant}\&\ref{ass:spurious}: Correlations} \\
\cmidrule(lr){1-4} \cmidrule(lr){5-8}
Metric & KUL & IV-2a & ISRUC & Metric & KUL & IV-2a & ISRUC \\
\midrule
Raw MMD ($\times 10^{-3}$)    & 1.47 & 1.41 & 1.13 & CKA (test set)      & 0.737 & 0.623 & 0.856 \\
Shared MMD ($\times 10^{-3}$) & 21.5 & 3.59 & 6.54 & Gap (shared branch)  & 0.220 & 0.170 & 0.021 \\
Routed MMD ($\times 10^{-3}$) & 12.7 & 4.88 & 3.90 & Gap (routed branch)  & 0.231 & 0.188 & 0.035 \\
\bottomrule
\end{tabular}
}
\end{center}
\end{table}

As shown in Table~\ref{tab:theory_verification}, the raw-feature MMD values are small on all three datasets, confirming that the cross-subject distribution shift required by Assumption~\ref{ass:distribution_shift} is bounded. CKA between shared and routed branch representations on held-out test subjects is consistently high, supporting the existence of invariant correlations posited by Assumption~\ref{ass:invariant}. The generalization gap of the routed branch exceeds that of the shared branch on all datasets, indicating that routed experts capture more subject-specific correlations that do not transfer, consistent with Assumption~\ref{ass:spurious}.

\subsection{Limitation}

The proposed framework has been evaluated on classification tasks. Extending it to regression-based EEG decoding tasks, such as vigilance estimation~\citep{zheng2017vigilance} and speech envelope reconstruction~\citep{defossez2023speech}, remains an open direction for future work.

\section{Conclusion}

In this paper, we show that shared and routed experts are suited to different components of the target function, with their relative advantage determined by the reducibility cost from branch-specific to branch-invariant structure. Based on this insight, we proposed SREA. Experiments under subject-independent settings show that SREA consistently outperforms strong task-specific baselines, standard ensemble methods, and recent EEG foundation models. We further conduct case studies, ablations, complexity analysis, and validation on assumptions, all of which support the proposed design and its underlying alignment principle.

\bibliographystyle{plainnat}
\bibliography{neurips_2026}

\newpage
\appendix

\section{Appendix}

This document accompanies the paper \emph{Aligning Shared and Routed Experts for Cross-Subject EEG Generalization}.
We provide benchmark dataset descriptions, implementation details, comparison with ensemble methods, synthetic verification, qualitative analysis, and complete proofs for the theorems.

\subsection{Benchmark Datasets}
\label{sec:benchmark_datasets}

In this section, we detail the benchmark datasets and preprocessing used in the main text. We evaluate the proposed framework on seven publicly available datasets spanning three EEG analysis tasks. The statistics of all datasets are summarized in Table~\ref{tab:experiment_statistics}.

For preprocessing, we apply a bandpass filter with cutoff frequencies of 0.1--50 Hz to preserve task-relevant neural activity, followed by a 50 Hz notch filter to suppress power-line interference. For BCI IV-2a and BCI IV-2b, the signals are segmented into 4-second windows without overlap. For Sleep-EDFx and ISRUC-3, we use 30-second windows without overlap. For the KUL, DTU, and AVED datasets, EEG signals are segmented into 1-second decision windows with 50\% overlap.

\begin{table}[h]
\caption{Statistics of seven publicly available datasets. We report the task type, total number of subjects, total number of categories, and total number of electrodes. AAD denotes auditory attention decoding, MI denotes motor imagery, and SSD denotes sleep stage detection.}
\label{tab:experiment_statistics}
\begin{center}
\begin{tabular}{llccc}
    \toprule
    Dataset           & Task & Subjects & Categories & Electrodes  \\
    \midrule
    BCI IV-2a (2012)  & MI   & 9     & 4     & 22  \\
    BCI IV-2b (2007)  & MI   & 9     & 2     & 3   \\
    Sleep-EDFx (2000) & SSD  & 79    & 5     & 2   \\
    ISRUC-3 (2016)    & SSD  & 9     & 5     & 6   \\
    KUL (2016)        & AAD  & 16    & 2     & 64  \\
    DTU (2017)        & AAD  & 18    & 2     & 64  \\
    AVED (2024)       & AAD  & 20    & 2     & 32  \\
    \bottomrule
\end{tabular}
\end{center}
\end{table}

For motor imagery classification, we use the BCI IV-2a dataset \citep{tangermann2012review} and the BCI IV-2b dataset \citep{leeb2007brain}.

The BCI IV-2a dataset contains EEG recordings from 9 subjects (A01--A09). EEG signals were recorded using 22 Ag/AgCl electrodes at a sampling rate of 250 Hz. Each subject participated in two sessions on separate days, with the first session allocated for training and the second for testing. Each session consisted of 288 trials, with 72 trials per class. We used the temporal segment from 2 to 6 seconds in our experiments. Labels indicate one of four motor imagery tasks, including left hand, right hand, both feet, and tongue.

The BCI IV-2b dataset contains EEG recordings from 9 subjects (B01--B09). EEG signals were recorded using 3 bipolar channels (C3, Cz, and C4) at a sampling rate of 250 Hz. Each subject participated in five sessions, where the first three sessions were used for calibration and the remaining two sessions were used for testing. There are approximately 400 trials in the training set and 320 trials in the test set. We used the temporal segment from 3 to 7 seconds in our experiments. Labels indicate one of two motor imagery tasks, including left hand and right hand.

For sleep stage classification, we use the Sleep-EDFx dataset \citep{goldberger2000physiobank} and the ISRUC-3 dataset \citep{khalighi2016isruc}.

The Sleep-EDFx dataset contains whole-night polysomnographic recordings from 78 healthy individuals aged 25 to 101 years. Each recording includes EEG, EOG, and chin EMG signals. Each participant has two day-night recordings, with the exception of three individuals who have only one recording due to equipment issues. The dataset comprises 195,479 sleep epochs in total. Labels indicate one of five sleep stages annotated in 30-second epochs, including Wake, N1, N2, N3, and REM.

The ISRUC-3 dataset contains polysomnographic recordings from 10 healthy subjects. Each recording includes EEG, EOG, and EMG signals, segmented into 30-second epochs. We use data from 30 minutes before and after the in-bed period. Following the latest AASM sleep standard, we merge the N3 and N4 stages into a single N3 stage. Labels indicate one of five sleep stages, including Wake, N1, N2, N3, and REM.

For auditory attention decoding, we use the KUL dataset \citep{das2016effect}, the DTU dataset \citep{fuglsang2017noise}, and the AVED dataset \citep{fan2024msfnet}.

The KUL dataset contains EEG recordings from 16 normal-hearing subjects. EEG signals were recorded using a 64-channel BioSemi ActiveTwo system. Each subject was instructed to attend to one of two competing speech streams, with speakers positioned at 90° to the left or right. Each subject completed 8 trials, each lasting 6 minutes. Labels indicate the attended direction.

The DTU dataset contains EEG recordings from 18 normal-hearing subjects. EEG signals were recorded using a 64-channel BioSemi system. Each subject performed a target speaker tracking task in an environment with reverberation and dynamic background noise, attending to one of two competing speakers positioned at 60° relative to the subject. Each subject completed 60 trials, each lasting approximately 50 seconds. Labels indicate the attended speaker.

The AVED dataset contains EEG recordings from 20 normal-hearing subjects. EEG signals were recorded using a 32-channel system. Subjects were evenly divided into two experimental conditions: audio-only and audio-visual, with 10 subjects in each condition. Each subject was instructed to attend to one of two competing speech streams, with speakers positioned at 90° to the left or right. In the audio-visual condition, subjects also watched a video of the narrator they were instructed to focus on. Each subject completed 16 trials, each lasting 152 seconds. Labels indicate the attended direction.

\subsection{Implementation Details}
\label{sec:implementation_details}

We use the Adam optimizer for both the shared expert and the routed experts. The batch size $N_B$ is set to 256, and the maximum number of training epochs is set to 100. Early stopping with a patience of 10 epochs is employed to prevent overfitting. The learning rate is set to 0.0001 for the shared expert and the routed experts. Weight decay is set to 0.0005. Since task-specific models cannot accommodate different tasks with varying topologies and decision windows, we use ListenNet \citep{fan2025listennet} for AAD, DeepConvNet \citep{schirrmeister2017deep} for MI, and SleepWaveNet \citep{wang2024subject} for SSD. We adopt these network architectures but do not use the training methods from these papers. Instead, we use the proposed SREA for training.

We conduct experiments with foundation brain models, including LaBraM \citep{jianglarge}, CBraMod \citep{wangcbramod}, and EEGPT \citep{wang2024eegpt}. To adapt these foundation brain models for AAD, MI, and SSD, we prepend two layers before the foundation brain models: a 1D convolutional layer that maps the electrode channels from the dataset to the number of channels used during pretraining, and a resampling layer that resamples the EEG signals to match the patch size used in pretraining. We replace the classification head with the proposed shared expert and routed experts.

The proposed loss function is insensitive to the weighting coefficients, so all coefficients are set to 1. We use $M = 5$ experts and activate one expert at a time. We set $\rho = 10\%$. When the number of electrodes is fewer than 10, we randomly mask temporal segments within the electrodes. We conduct experiments on 2 Tesla A100 GPUs.

The source code is implemented based on PyTorch \citep{paszke2019pytorch} and TorchEEG \citep{zhang2024torcheegemo}. We provide the code in the supplementary materials for reviewers to reproduce all experimental results. We will release the code on GitHub upon paper acceptance for follow-up studies.

\subsection{Comparison with Ensemble Methods}
\label{sec:ensemble_comparison}

To further demonstrate that the gains of the proposed framework arise from collaborative shared-routed alignment rather than simple model aggregation, we compare with two standard ensemble strategies: Bagging and Boosting. Both ensemble baselines are built on the same task-specific backbone used by the proposed method (ListenNet for AAD, DeepConvNet for MI, and SleepWaveNet for SSD). Table~\ref{tab:ensemble} reports the results on all seven datasets.

\begin{table*}[h]
\caption{Comparison with ensemble methods under subject-independent conditions. Accuracy (\%) is reported. $^\ast$ indicates significance (paired $t$-test, $p < 0.05$).}
\label{tab:ensemble}
\begin{center}
\adjustbox{max width=\linewidth}{
\begin{tabular}{@{}lccc@{}}
\toprule
\multicolumn{4}{c}{(a) Auditory Attention Decoding} \\
\midrule
Model & KUL & DTU & AVED \\
\midrule
ListenNet            & 78.1$^\ast$ & 56.8$^\ast$ & 52.8$^\ast$ \\
Bagging (ListenNet)  & 79.1$^\ast$ & 57.2$^\ast$ & 53.2$^\ast$ \\
Boosting (ListenNet) & 79.6$^\ast$ & 57.5$^\ast$ & 53.1$^\ast$ \\
\midrule
Proposed             & \textbf{81.7} & \textbf{58.3} & \textbf{53.9} \\
\bottomrule
\end{tabular}
\hspace{1em}
\begin{tabular}{@{}lcc@{}}
\toprule
\multicolumn{3}{c}{(b) Motor Imagery} \\
\midrule
Model & IV-2a & IV-2b \\
\midrule
DeepConvNet            & 60.2$^\ast$ & 75.2$^\ast$ \\
Bagging (DeepConvNet)  & 60.6$^\ast$ & 76.6$^\ast$ \\
Boosting (DeepConvNet) & 60.5$^\ast$ & 76.8$^\ast$ \\
\midrule
Proposed               & \textbf{61.3} & \textbf{77.5} \\
\bottomrule
\end{tabular}
\hspace{1em}
\begin{tabular}{@{}lcc@{}}
\toprule
\multicolumn{3}{c}{(c) Sleep Stage Detection} \\
\midrule
Model & EDFx & ISRUC \\
\midrule
SleepWaveNet            & 83.2$^\ast$ & 79.2$^\ast$ \\
Bagging (SleepWaveNet)  & 83.3$^\ast$ & 79.4$^\ast$ \\
Boosting (SleepWaveNet) & 83.5$^\ast$ & 79.4$^\ast$ \\
\midrule
Proposed                & \textbf{84.1} & \textbf{79.8} \\
\bottomrule
\end{tabular}
}
\end{center}
\end{table*}

As shown in Table~\ref{tab:ensemble}, although Bagging and Boosting provide modest improvements over the corresponding single-backbone baselines, the proposed framework consistently outperforms both ensemble strategies across all seven datasets. For example, on KUL, the proposed method achieves 81.7\%, compared to 79.6\% for Boosting and 79.1\% for Bagging. On BCI IV-2b, the proposed method reaches 77.5\%, exceeding both Bagging (76.6\%) and Boosting (76.8\%). These results confirm that the gains of SREA stem from the principled collaboration between shared and routed experts via mutual-guided reweighting, rather than from simple model aggregation. Standard ensemble methods aggregate independently trained models without distinguishing reducible invariant structure from irreducible subject-specific structure, limiting their ability to address the complementary failure modes identified in our theoretical analysis.

\subsection{Synthetic Verification}
\label{sec:synthetic_verification}

To empirically validate the theoretical analysis in Sec.~\ref{sec:preliminary}, we construct a synthetic domain generalization benchmark with hierarchical domain structures. Specifically, we organize domains into groups to simulate subject groups with similar neural patterns. Each group center is first sampled from a Gaussian distribution, and each domain center is then obtained by perturbing its corresponding group center. Samples are finally drawn from Gaussian distributions centered at the domain-specific means.

Based on this construction, we define a domain-independent function shared across all domains and a domain-dependent function generated by hierarchical perturbations of group-level parameters. The target function of each domain is formulated as a weighted combination of these two components, where the weighting coefficient controls the proportion of reducible components.

We use five source domains, each containing 500 samples with an input dimension of 128, and conduct leave-one-domain-out cross-validation. By varying the proportion of reducible components from 0 to 1, we evaluate the performance of the full model and its ablated variants. The results are shown in Fig.~\ref{fig:synthetic_experiments}.

\begin{figure}[h]
    \begin{center}
    \includegraphics[width=0.6\linewidth]{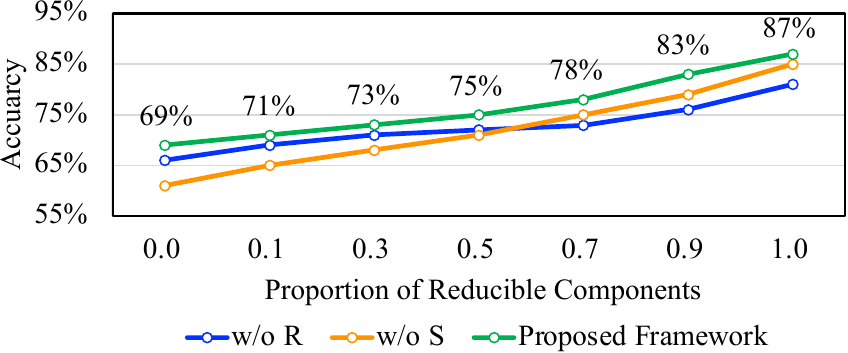}
    \end{center}
    \caption{Experiments on synthetic verification under varying proportions of reducible components. We report the target domain test accuracy (\%) with different ablation settings.}
    \label{fig:synthetic_experiments}
\end{figure}

Several observations can be made. When the proportion of reducible components is low, the routed experts perform better than the shared expert, since domain-specific processing dominates. As the reducible proportion increases, the advantage of routed experts gradually diminishes, while the shared expert becomes increasingly effective. Importantly, the proposed full framework maintains the best performance across different settings, indicating that it can adaptively benefit from both paradigms. This result strongly supports our theoretical claim that the effectiveness of shared and routed experts depends on the reducibility cost of domain-specific functions to domain-invariant ones.

\subsection{Qualitative Analysis}
\label{sec:qualitative_analysis}

We further perform qualitative analysis to visualize the representations learned by the shared expert and routed experts. Specifically, we train the model on the BCI IV-2b dataset and use UMAP to project the learned embeddings of both training and test samples into a two-dimensional space. The visualization results are presented in Fig.~\ref{fig:qualitative_experiment}.

\begin{figure}
    \begin{center}
    \includegraphics[width=0.6\linewidth]{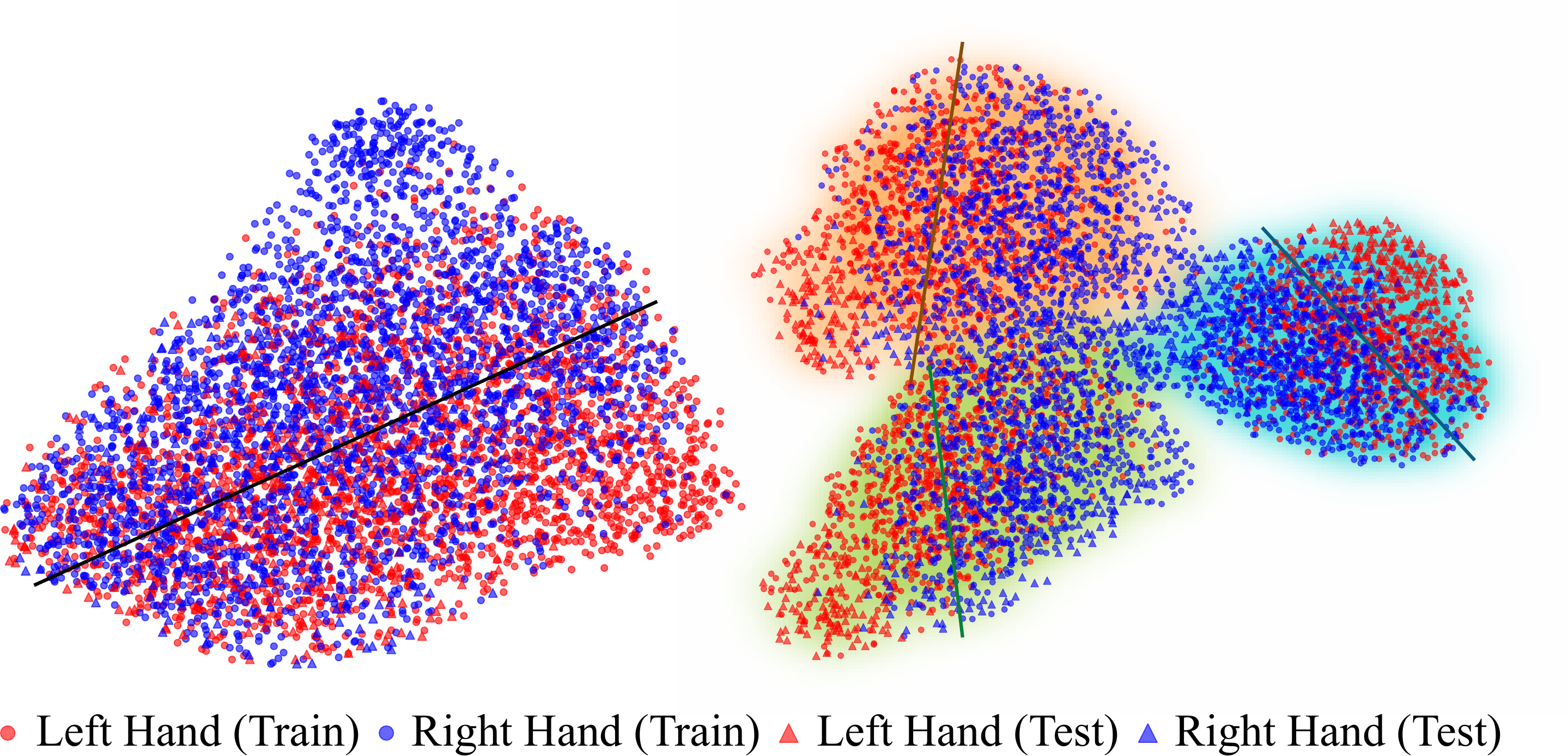}
    \end{center}
    \caption{Qualitative analysis on the BCI-IV-2b dataset. The left subfigure visualizes the embeddings from the shared expert model using UMAP for the 1st fold of the training and test sets, and the right subfigure shows the embeddings from the routed expert model using UMAP for the 1st fold of the training and test sets.}
    \label{fig:qualitative_experiment}
\end{figure}

As shown in the left panel of Fig.~\ref{fig:qualitative_experiment}, the embeddings produced by the shared expert exhibit a relatively unified decision boundary, indicating that the shared branch captures common task-relevant structures across subjects. In contrast, the right panel shows that the routed experts form multiple specialized decision regions. Moreover, the routed assignments reveal that different samples are adaptively handled by different experts, suggesting that the routed branch indeed models heterogeneous subject-specific patterns. These observations are consistent with our theoretical analysis: the shared expert is suitable for reducible and invariant components, whereas the routed experts are advantageous for irreducible and branch-dependent components.

\subsection{Complexity Analysis}

To demonstrate the practical efficiency of the proposed framework, we analyze the parameter overhead and per-batch training time introduced by each component. Fig.~\ref{fig:complexity} visualizes both metrics on three representative datasets using the corresponding task-specific backbones.

\begin{figure}[h]
    \begin{center}
    \includegraphics[width=0.85\linewidth]{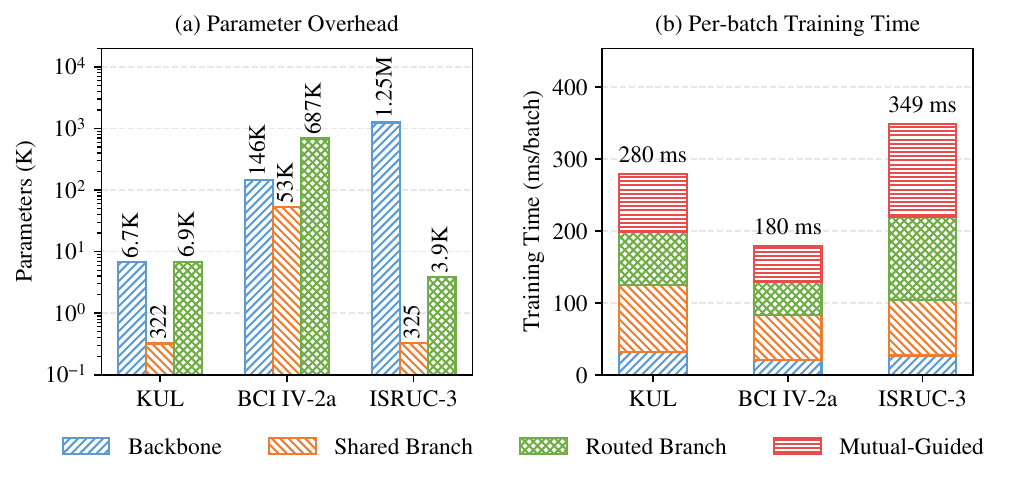}
    \end{center}
    \vspace{-1em}
    \caption{Complexity analysis. (a) Parameter overhead per component (log scale). The mutual-guided reweighting introduces zero extra parameters. (b) Per-batch training time breakdown (batch size 256, NVIDIA A100).}
    \label{fig:complexity}
\end{figure}

As shown in Fig.~\ref{fig:complexity}, the parameter overhead of the routed MoE head scales with the backbone's feature dimension, ranging from 3.9K (+0.3\%) for SleepWaveNet to 686.6K for DeepConvNet, while the mutual-guided mechanism introduces zero additional parameters. In terms of training time, each component adds moderate per-batch cost, and the total per-batch time ranges from 180 ms to 349 ms across different backbones.

\subsection{Proof for Theorems}

In this section, we provide complete proofs for the theorems stated in the main text. We first state the algorithmic alignment theorem in the i.i.d.\ setting \citep{xucan}.

\begin{theorem}
\label{thm:iid_alignment}
Fix $\epsilon > 0$ and $\delta \in (0,1)$. Let $g$ be a target function and $\mathcal{N}$ be a neural network with $n$ modules. Suppose $\{\boldsymbol{x}_i\}_{i=1}^M$ are i.i.d.\ samples drawn from a distribution $D$, and let $y_i = g(\boldsymbol{x}_i)$. Assume the following conditions hold:

\noindent (a) Algorithmic Stability. Let $A$ be a learning algorithm for each module $\mathcal{N}_i$. Suppose $f = A(\{(\boldsymbol{x}_i, y_i)\})$ and $\hat{f} = A(\{(\hat{\boldsymbol{x}}_i, y_i)\})$. Then for any $\boldsymbol{x}$, $\|f(\boldsymbol{x}) - \hat{f}(\boldsymbol{x})\| \leq L_0 \cdot \max_i \|\boldsymbol{x}_i - \hat{\boldsymbol{x}}_i\|$.

\noindent (b) Sequential Training. The modules $\mathcal{N}_1, \ldots, \mathcal{N}_n$ are trained sequentially. Module $\mathcal{N}_1$ is trained on samples $\{(\boldsymbol{x}_i^{(1)}, f_1(\boldsymbol{x}_i^{(1)}))\}_{i=1}^M$, where $\boldsymbol{x}_i^{(1)}$ are obtained from the training dataset. For $j > 1$, the inputs $\hat{\boldsymbol{x}}_i^{(j)}$ to module $\mathcal{N}_j$ are the outputs from the previous modules, while the labels are generated by the ground-truth functions $f_1, \ldots, f_{j-1}$ applied to $\boldsymbol{x}_i^{(1)}$.

\noindent (c) Lipschitz Continuity. Each learned function $\hat{f}_j$ satisfies $\|\hat{f}_j(\boldsymbol{x}) - \hat{f}_j(\hat{\boldsymbol{x}})\| \leq L_1 \|\boldsymbol{x} - \hat{\boldsymbol{x}}\|$ for some constant $L_1 > 0$.

\noindent Under conditions (a)--(c), if $\mathrm{Align}(\mathcal{N}, g, \epsilon, \delta) \leq M$, then there exists a learning algorithm $A$ such that $\mathbb{P}_{\boldsymbol{x} \sim D}[\|\mathcal{N}(\boldsymbol{x}) - g(\boldsymbol{x})\| \leq O(\epsilon)] \geq 1 - O(\delta)$, where $\mathcal{N}$ denotes the network produced by $A$ on the training data $\{(\boldsymbol{x}_i, y_i)\}_{i=1}^M$.
\end{theorem}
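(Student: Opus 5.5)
Under conditions (a)--(c), I would argue by induction over the modules, propagating the error from one module to the next. The first step is to use the hypothesis $\mathrm{Align}(\mathcal{N}, g, \epsilon, \delta) \leq M$. By Definition~\ref{def:alignment} this gives $n \max_i \mathcal{M}(f_i, \mathcal{N}_i, \epsilon, \delta) \leq M$, so each module has at least $\mathcal{M}(f_j, \mathcal{N}_j, \epsilon, \delta)$ samples available. I would take $A$ to be the learning algorithm that witnesses each $\mathcal{M}(f_j,\mathcal{N}_j,\epsilon,\delta)$. Then, if module $\mathcal{N}_j$ were trained on exact inputs $\boldsymbol{x}_i^{(j)} = f_{j-1}\circ\cdots\circ f_1(\boldsymbol{x}_i)$ with labels $f_j(\boldsymbol{x}_i^{(j)})$, the resulting function $f_j^{\star}$ would satisfy $\|f_j^{\star}(\boldsymbol{x}^{(j)}) - f_j(\boldsymbol{x}^{(j)})\| \leq \epsilon$ with probability at least $1-\delta$ over $\boldsymbol{x}\sim D$.

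The second step controls what sequential training actually does. By condition (b), module $j$ is trained on perturbed inputs $\hat{\boldsymbol{x}}_i^{(j)} = \hat f_{j-1}\circ\cdots\circ\hat f_1(\boldsymbol{x}_i)$ but with the correct labels. Algorithmic stability, condition (a), then gives
\[
\|\hat f_j(\boldsymbol{z}) - f_j^{\star}(\boldsymbol{z})\| \leq L_0 \max_i \|\hat{\boldsymbol{x}}_i^{(j)} - \boldsymbol{x}_i^{(j)}\| \quad\text{for all } \boldsymbol{z}.
\]
Let $e_j$ denote the deviation at the output of module $j$. Writing $\hat{\boldsymbol{x}}^{(j+1)} - \boldsymbol{x}^{(j+1)}$ as
\[
[\hat f_j(\hat{\boldsymbol{x}}^{(j)}) - \hat f_j(\boldsymbol{x}^{(j)})] + [\hat f_j(\boldsymbol{x}^{(j)}) - f_j^{\star}(\boldsymbol{x}^{(j)})] + [f_j^{\star}(\boldsymbol{x}^{(j)}) - f_j(\boldsymbol{x}^{(j)})]
\]
and bounding the first bracket with the Lipschitz condition (c), I get the recursion $e_{j+1} \leq L_1 e_j + L_0 e_j^{\max} + \epsilon$. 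Here $e_j^{\max}$ is the maximum deviation over training points. Unrolling gives $e_n \leq \epsilon \sum_{k<n}(L_0+L_1)^k = O(\epsilon)$, treating $n, L_0, L_1$ as constants. A union bound over the $n$ modules' failure events then yields total failure probability $n\delta = O(\delta)$.

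The main obstacle is the term $e_j^{\max}$, a maximum over the training points, whereas the per-module guarantee is only probabilistic. I would handle this by conditioning on the event that the learned modules are $\epsilon$-accurate on all training inputs. ERM interpolation on the training set gives this event (the hidden labels are fit exactly), and alternatively it can be absorbed into the $O(\delta)$ failure budget. Following \citep{xucan}, one can instead let $A$ memorize training pairs, so that on-sample errors vanish. This keeps $e_j^{\max} \leq O(\epsilon)$, which closes the induction and gives $\mathbb{P}_{\boldsymbol{x}\sim D}[\|\mathcal{N}(\boldsymbol{x}) - g(\boldsymbol{x})\| \leq O(\epsilon)] \geq 1 - O(\delta)$.
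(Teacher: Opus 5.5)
The paper does not prove this statement; it quotes it without proof from prior work \citep{xucan}. Your module-by-module induction (three-term triangle inequality, stability for the perturbed training inputs, Lipschitz continuity for the perturbed test input, union bound over the $n$ modules) is the standard argument behind it.

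The genuine gap is the step you flag yourself. Nothing in (a)--(c) controls $e_j^{\max}=\max_i\|\hat{\boldsymbol{x}}_i^{(j)}-\boldsymbol{x}_i^{(j)}\|$, and two of your three remedies do not supply it.

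\emph{Absorbing it into the $O(\delta)$ budget fails.} The guarantee behind $\mathcal{M}(f_j,\mathcal{N}_j,\epsilon,\delta)$ is about a fresh draw $\boldsymbol{x}\sim D$, independent of the sample. It therefore says nothing about the error of $f_j^{\star}$ at the very inputs it was fit on. Even granting failure probability $\delta$ at each training input, a union bound over them costs $M\delta$. Since $M$ is only bounded \emph{below} by the alignment, $M\delta$ need not be $O(\delta)$, or even less than $1$.

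\emph{Making $A$ memorize its training pairs does not work either.} Either it changes nothing (if $A$ already fits them), or it in general destroys the $L_1$-Lipschitz property (c) that your recursion uses at the test point. It is also not the device used in the cited proof \citep{xucan}. That proof simply applies the inductive $O(\epsilon)$ bound at the training inputs as well, i.e.\ it uses the missing uniform bound tacitly.

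\emph{Exact fitting does close the argument}, and more cleanly than you present it. Suppose every module fits its own training pairs. By induction on $j$ one gets $\hat{\boldsymbol{x}}_i^{(j)}=\boldsymbol{x}_i^{(j)}$ for all $i$, so module $j$ is trained on the clean sample. Hence $\hat f_j=f_j^{\star}$ and $e_j^{\max}=0$, and condition (a) is never invoked. On the intersection of the $n$ per-module good events (probability at least $1-n\delta$), the recursion is just $e_{j+1}\leq L_1e_j+\epsilon$.

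However, exact fitting, or any uniform $O(\epsilon)$ bound on the modules' errors at the training inputs, does not follow from (a)--(c). It is an additional hypothesis on $A$. It must be stated and checked to be compatible with $A$ attaining $\mathcal{M}(f_j,\mathcal{N}_j,\epsilon,\delta)$ and with (c). With that hypothesis made explicit, your proof is complete; without it, the step $e_j^{\max}\leq O(\epsilon)$ is unsupported.
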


Next, we state a lemma that transfers probability bounds from the training distribution to the test distribution, given Assumption 1.

\begin{lemma}
\label{lem:prob_transfer}
Under Assumption 1, for any function $f$ and measurable set $A$, if $\mathbb{P}_{D_{\mathrm{tr}}}[f(\mathcal{N}_1(\boldsymbol{x})) \in A] \leq \delta$, then $\mathbb{P}_{D_{\mathrm{te}}}[f(\mathcal{N}_1(\boldsymbol{x})) \in A] \leq C\delta + \eta$.
\end{lemma}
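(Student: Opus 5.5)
The plan is to apply Assumption~\ref{ass:distribution_shift} directly to the preimage event. Fix $f$ and the measurable set $A$, and define
\[
B := f^{-1}(A) = \{\boldsymbol{u} : f(\boldsymbol{u}) \in A\}.
\]
We need $B$ to be measurable in the feature space at the output of $\mathcal{N}_1$. This holds whenever $f$ is measurable, and I will state that as a standing convention; it is the only implicit requirement. With this choice the events coincide:
\[
\{f(\mathcal{N}_1(\boldsymbol{x})) \in A\} = \{\mathcal{N}_1(\boldsymbol{x}) \in B\}.
\]
So both probabilities in the lemma can be written as probabilities that $\mathcal{N}_1(\boldsymbol{x})$ lands in $B$, under $D_{\mathrm{tr}}$ and $D_{\mathrm{te}}$ respectively.

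Next, invoke Assumption~\ref{ass:distribution_shift} with the measurable set $B$. This gives
\[
\mathbb{P}_{D_{\mathrm{te}}}[\mathcal{N}_1(\boldsymbol{x}) \in B] \le C\,\mathbb{P}_{D_{\mathrm{tr}}}[\mathcal{N}_1(\boldsymbol{x}) \in B] + \eta .
\]
By hypothesis the training-side probability is at most $\delta$. Since $C \ge 1 > 0$, the map $t \mapsto Ct + \eta$ is monotone, so the right-hand side is at most $C\delta + \eta$. Translating back through the event identity yields the claim.

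There is no real obstacle. The only point needing care is measurability of $f^{-1}(A)$, so that Assumption~\ref{ass:distribution_shift} applies. I would handle this by restricting to measurable $f$, which covers every case used later (the compositions $g_c$, $g_s$ and the learned modules $\mathcal{N}'$). I would also note the way the lemma will be used for Theorem~\ref{thm:backbone}. Take $A$ to be the failure set $\{\|\cdot - y\| > O(\epsilon)\}$, which turns the in-distribution guarantee of Theorem~\ref{thm:iid_alignment} into a test-distribution bound $1 - C\cdot O(\delta) - \eta = 1 - O(\delta) - \eta$. Because $y$ depends on $\boldsymbol{x}$, this step should be phrased with $y$ treated as a function of the feature, or with the joint law in place of the marginal.
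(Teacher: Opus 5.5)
Your proof is correct and follows the paper's argument: the paper also passes to the preimage $B=\{\boldsymbol{s}: f(\boldsymbol{s})\in A\}$, applies Assumption~\ref{ass:distribution_shift} to $B$, and bounds the training-side probability by $\delta$. Requiring $f$ to be measurable is a fair tightening of the paper's ``for any function $f$''. As for your closing remark, the paper avoids the dependence on $y$ by transferring the event $\|\mathcal{N}(\boldsymbol{x}) - g_c(\mathcal{N}_1(\boldsymbol{x}))\| > O(\epsilon)$, which depends on $\boldsymbol{x}$ only through $\mathcal{N}_1(\boldsymbol{x})$.
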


\begin{proof}
Define the preimage $B = \{\boldsymbol{s} : f(\boldsymbol{s}) \in A\}$. Then
\begin{equation}
\mathbb{P}_{D_{\mathrm{te}}}[f(\mathcal{N}_1(\boldsymbol{x})) \in A] = \mathbb{P}_{D_{\mathrm{te}}}[\mathcal{N}_1(\boldsymbol{x}) \in B].
\end{equation}
Applying Assumption 1 with the measurable set $B$, we obtain
\begin{align}
&\mathbb{P}_{D_{\mathrm{te}}}[\mathcal{N}_1(\boldsymbol{x}) \in B] \notag \\
&\leq C \cdot \mathbb{P}_{D_{\mathrm{tr}}}[\mathcal{N}_1(\boldsymbol{x}) \in B] + \eta \notag \\
&= C \cdot \mathbb{P}_{D_{\mathrm{tr}}}[f(\mathcal{N}_1(\boldsymbol{x})) \in A] + \eta \notag \\
&\leq C\delta + \eta.
\end{align}
\end{proof}

We now prove Theorem 1 based on Theorem 4. We address the distribution shift in $\boldsymbol{x}$ with Lemma~\ref{lem:prob_transfer} and analyze the distribution shift in $y$ with Assumptions 1, 2, and 3.

\begin{proof}
\textbf{Condition 1:} From Theorem 4, since $\mathrm{Align}(\mathcal{N}', g_c, \epsilon, \delta) \leq |\mathcal{E}_{\mathrm{tr}}|$, we have
\begin{equation}
\mathbb{P}_{D_{\mathrm{tr}}}[\|\mathcal{N}(\boldsymbol{x}) - g_c(\mathcal{N}_1(\boldsymbol{x}))\| > O(\epsilon)] \leq O(\delta).
\end{equation}
By Lemma~\ref{lem:prob_transfer}, this implies
\begin{align}
&\mathbb{P}_{D_{\mathrm{te}}}[\|\mathcal{N}(\boldsymbol{x}) - g_c(\mathcal{N}_1(\boldsymbol{x}))\| > O(\epsilon)] \notag \\
&\leq C \cdot O(\delta) + \eta \notag \\
&= O(\delta) + \eta.
\end{align}
Equivalently,
\begin{equation}
\mathbb{P}_{D_{\mathrm{te}}}[\|\mathcal{N}(\boldsymbol{x}) - g_c(\mathcal{N}_1(\boldsymbol{x}))\| \leq O(\epsilon)] \geq 1 - O(\delta) - \eta.
\end{equation}
Combining with Assumption 2, which states $\mathbb{P}_{D_{\mathrm{te}}}[\|g_c(\mathcal{N}_1(\boldsymbol{x})) - y\| \leq \epsilon] \geq 1 - \delta$, we obtain
\begin{align}
&\mathbb{P}_{D_{\mathrm{te}}}[\|\mathcal{N}(\boldsymbol{x}) - y\| \leq O(\epsilon)] \notag \\
&\geq \mathbb{P}_{D_{\mathrm{te}}}[\|g_c(\mathcal{N}_1(\boldsymbol{x})) - y\| \leq \epsilon] \notag \\
&\quad \cdot \mathbb{P}_{D_{\mathrm{te}}}[\|\mathcal{N}(\boldsymbol{x}) - g_c(\mathcal{N}_1(\boldsymbol{x}))\| \leq O(\epsilon)] \notag \\
&\geq (1 - \delta)(1 - O(\delta) - \eta) \notag \\
&= 1 - O(\delta) - \eta,
\end{align}
where the first inequality follows from the triangle inequality $O(\epsilon) + O(\epsilon) = O(\epsilon)$.

\textbf{Condition 2:} Similarly, from Theorem 4, we have
\begin{equation}
\mathbb{P}_{D_{\mathrm{tr}}}[\|\mathcal{N}(\boldsymbol{x}) - g_s(\mathcal{N}_1(\boldsymbol{x}))\| > O(\epsilon)] \leq O(\delta).
\end{equation}
By Lemma~\ref{lem:prob_transfer},
\begin{equation}
\mathbb{P}_{D_{\mathrm{te}}}[\|\mathcal{N}(\boldsymbol{x}) - g_s(\mathcal{N}_1(\boldsymbol{x}))\| \leq O(\epsilon)] \geq 1 - O(\delta) - \eta.
\end{equation}
Combining with Assumption~3, which states $\mathbb{P}_{D_{\mathrm{te}}}[\|g_s(\mathcal{N}_1(\boldsymbol{x})) - y\| > \omega(\epsilon)] \geq 1 - \delta$, we obtain
\begin{align}
&\mathbb{P}_{D_{\mathrm{te}}}[\|\mathcal{N}(\boldsymbol{x}) - y\| > \omega(\epsilon)] \notag \\
&\geq \mathbb{P}_{D_{\mathrm{te}}}[\|g_s(\mathcal{N}_1(\boldsymbol{x})) - y\| > \omega(\epsilon)] \notag \\
&\quad \cdot \mathbb{P}_{D_{\mathrm{te}}}[\|\mathcal{N}(\boldsymbol{x}) - g_s(\mathcal{N}_1(\boldsymbol{x}))\| \leq O(\epsilon)] \notag \\
&\geq (1 - \delta)(1 - O(\delta) - \eta) \notag \\
&= 1 - O(\delta) - \eta,
\end{align}
where the first inequality follows from the reverse triangle inequality $\omega(\epsilon) - O(\epsilon) = \omega(\epsilon)$.
\end{proof}

We now prove Theorem 2, which compares the alignment of shared expert networks and routed expert networks.

\begin{proof}
We first compute the alignment for the routed expert network $\mathcal{R}$. The network $\mathcal{R}$ consists of $M$ routed experts $\{R_j\}_{j=1}^M$ and a router $R_0$, yielding $M + 1$ modules in total. Each routed expert $R_j$ learns the branch-dependent function $h_j$, and the router $R_0$ learns the branching function $h_0$. By Definition 1, the alignment of $\mathcal{R}$ is given by
\begin{align}
&\mathrm{Align}(\mathcal{R}, g, \epsilon, \delta) \notag \\
&= (M + 1) \cdot \max\bigl\{\max_{j \in [M]} \mathcal{M}(R_j, h_j, \epsilon, \delta), \notag \\
&\qquad\qquad\qquad\quad \mathcal{M}(R_0, h_0, \epsilon, \delta)\bigr\}.
\end{align}

Define $\mathcal{C}^* := \max_{j \in [M]} \mathcal{M}(R_j, h_j, \epsilon, \delta)$. By the assumption that the branching function is no harder to learn than the branch-dependent functions, we have $\mathcal{M}(R_0, h_0, \epsilon, \delta) \leq \mathcal{C}^*$. Therefore,
\begin{align}
&\max\bigl\{\max_{j \in [M]} \mathcal{M}(R_j, h_j, \epsilon, \delta), \notag \\
&\qquad\quad \mathcal{M}(R_0, h_0, \epsilon, \delta)\bigr\} = \mathcal{C}^*,
\end{align}
and hence
\begin{equation}
\mathrm{Align}(\mathcal{R}, g, \epsilon, \delta) = (M + 1) \cdot \mathcal{C}^*.
\end{equation}

For the shared expert network $\mathcal{S}$, there is a single module $S_0$ that learns the entire function $g$. By Definition 1, the alignment of $\mathcal{S}$ is
\begin{align}
&\mathrm{Align}(\mathcal{S}, g, \epsilon, \delta) \notag \\
&= 1 \cdot \mathcal{M}(S_0, g, \epsilon, \delta) = \mathcal{M}(S_0, g, \epsilon, \delta).
\end{align}

\textbf{Condition 1:} Suppose $\alpha \geq M + 1$. By the lower bound condition in the theorem statement,
\begin{align}
\mathrm{Align}(\mathcal{S}, g, \epsilon, \delta) &= \mathcal{M}(S_0, g, \epsilon, \delta) \notag \\
&\geq \alpha \cdot \mathcal{C}^* \notag \\
&\geq (M + 1) \cdot \mathcal{C}^* \notag \\
&= \mathrm{Align}(\mathcal{R}, g, \epsilon, \delta).
\end{align}
Therefore, $\mathrm{Align}(\mathcal{R}, g, \epsilon, \delta) \leq \mathrm{Align}(\mathcal{S}, g, \epsilon, \delta)$.

\textbf{Condition 2:} Suppose $\beta \leq M + 1$. By the upper bound condition in the theorem statement,
\begin{align}
\mathrm{Align}(\mathcal{S}, g, \epsilon, \delta) &= \mathcal{M}(S_0, g, \epsilon, \delta) \notag \\
&\leq \beta \cdot \mathcal{C}^* \notag \\
&\leq (M + 1) \cdot \mathcal{C}^* \notag \\
&= \mathrm{Align}(\mathcal{R}, g, \epsilon, \delta).
\end{align}
Therefore, $\mathrm{Align}(\mathcal{R}, g, \epsilon, \delta) \geq \mathrm{Align}(\mathcal{S}, g, \epsilon, \delta)$.
\end{proof}

Next, we state a lemma that characterizes the effect of the collaborative objective.

\begin{lemma}
\label{lem:weighted_kd}
Let the collaborative objective functions for the shared expert and routed experts be $\mathcal{L}_S(\boldsymbol{x}) = \exp(\ell_S(\boldsymbol{x}) - \ell_R(\boldsymbol{x})) \cdot \ell_S(\boldsymbol{x})$ and $\mathcal{L}_R(\boldsymbol{x}) = \exp(\ell_R(\boldsymbol{x}) - \ell_S(\boldsymbol{x})) \cdot \ell_R(\boldsymbol{x})$, where $\ell_S$ and $\ell_R$ denote the cross-entropy losses with respect to ground-truth labels. Then the expected losses over the dataset can be rewritten as weighted knowledge distillation: $\mathcal{E}_S = \sum_{\boldsymbol{x} \in \mathcal{D}} a_S(\boldsymbol{x}) \cdot \mathrm{KL}(p^R(\boldsymbol{x}) \| q^S(\boldsymbol{x})) + C_S$ and $\mathcal{E}_R = \sum_{\boldsymbol{x} \in \mathcal{D}} a_R(\boldsymbol{x}) \cdot \mathrm{KL}(q^S(\boldsymbol{x}) \| p^R(\boldsymbol{x})) + C_R$, where $a_S(\boldsymbol{x}) = \exp(\ell_S(\boldsymbol{x}) - \ell_R(\boldsymbol{x}))$, $a_R(\boldsymbol{x}) = \exp(\ell_R(\boldsymbol{x}) - \ell_S(\boldsymbol{x}))$, and $C_S$, $C_R$ are constants independent of the optimization variables.
\end{lemma}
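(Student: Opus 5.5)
We decompose each per-sample cross-entropy into a KL term plus a cross-entropy-with-target term. The sample weights $a_S,a_R$ are treated as fixed (stop-gradient) coefficients during each update, as in standard reweighting.

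For a sample $\boldsymbol{x}$ with one-hot label $y$, write $q^S(\boldsymbol{x})$ and $p^R(\boldsymbol{x})$ for the predictive distributions of the shared and routed branches. Then $\ell_S(\boldsymbol{x}) = -\log q^S_y(\boldsymbol{x}) = \mathrm{CE}(y, q^S)$ and $\ell_R(\boldsymbol{x}) = \mathrm{CE}(y,p^R)$.

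The first step uses the identity, valid for any distributions $p$ and $q$:
\begin{equation*}
\mathrm{KL}(p\|q) = \mathrm{CE}(p,q) - H(p).
\end{equation*}
We introduce the routed branch's distribution as the teacher for $S$. When optimizing the shared expert, the routed branch is held fixed, so $p^R$ serves as a stop-gradient target. The relation $\ell_S = \mathrm{CE}(y,q^S)$ is then re-expressed relative to $p^R$:
\begin{equation*}
\mathrm{CE}(y,q^S) = \mathrm{KL}(p^R\|q^S) + H(p^R) + \sum_c \bigl(p^R_c - y_c\bigr)\log q^S_c .
\end{equation*}

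Multiplying by $a_S(\boldsymbol{x})$ and summing over $\mathcal{D}$ gives
\begin{equation*}
\mathcal{E}_S = \sum_{\boldsymbol{x}} a_S\,\mathrm{KL}(p^R\|q^S) + \sum_{\boldsymbol{x}} a_S H(p^R) + \text{residual}.
\end{equation*}
The term $\sum a_S H(p^R)$ depends only on the frozen teacher and the frozen weights, so it goes into $C_S$. The symmetric computation, with $q^S$ frozen, gives $\mathcal{E}_R$ with $\mathrm{KL}(q^S\|p^R)$ and $C_R$.

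The main obstacle is the residual $\sum_c (p^R_c - y_c)\log q^S_c$, which does depend on the student $q^S$. I would first try to absorb it by interpreting the objective in the regime the lemma targets, namely where the teacher is the well-fitting branch on that sample. There $a_S$ is large exactly when $\ell_R$ is small, so $p^R \approx y$ and the residual is of order $\ell_R$, which is small. This yields the stated form up to a vanishing term.

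Alternatively, one can adopt the convention common in distillation analyses that the ground-truth target coincides with the (sharpened) teacher distribution on the dataset, i.e. $p^R(\boldsymbol{x}) = y$ on $\mathcal{D}$. In that case the residual is identically zero. The identity then holds exactly, with $C_S = \sum a_S H(p^R)$, which is zero for one-hot targets.

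I would state this identification explicitly as the modeling step, and verify the symmetric case for $R$ in one line.
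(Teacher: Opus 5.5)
Your proposal follows the paper's proof: both write $\ell_S=-\log q^S_{i^*}$, use $\mathrm{KL}(p^R\|q^S)=H(p^R,q^S)-H(p^R)$, put $\sum_{\boldsymbol{x}} a_S H(p^R)$ into $C_S$, and drop the remaining mismatch via the hard-label approximation $p^R_{i^*}\approx 1$, with the $R$ case handled symmetrically. You are more explicit than the paper in two places, and both are needed: you write out the residual $\sum_c (p^R_c-y_c)\log q^S_c$, and you treat $a_S$ as a stop-gradient weight, which the paper uses only implicitly even though $a_S=p^R_{i^*}/q^S_{i^*}$ depends on $q^S$ and so $C_S$ is otherwise not independent of the optimization variables; as in the paper, the result is an approximation rather than an exact identity.
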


\begin{proof}
We derive the result for the shared expert. The derivation for the routed experts is symmetric. The expected collaborative objective over the dataset is $\mathcal{E}_S = \sum_{\boldsymbol{x} \in \mathcal{D}} \exp(\ell_S(\boldsymbol{x}) - \ell_R(\boldsymbol{x})) \cdot \ell_S(\boldsymbol{x})$. Let $i^*(\boldsymbol{x})$ denote the ground-truth label index for sample $\boldsymbol{x}$, let $q^S(\boldsymbol{x})$ denote the predicted probability distribution of the shared expert, and let $p^R(\boldsymbol{x})$ denote the predicted probability distribution of the routed experts. The cross-entropy losses can be expressed as $\ell_S(\boldsymbol{x}) = -\log q_{i^*}^S(\boldsymbol{x})$ and $\ell_R(\boldsymbol{x}) = -\log p_{i^*}^R(\boldsymbol{x})$.

Substituting the cross-entropy losses into the collaborative objective, we obtain
\begin{align}
\mathcal{E}_S &= \sum_{\boldsymbol{x} \in \mathcal{D}} \exp(-\log q_{i^*}^S(\boldsymbol{x}) + \log p_{i^*}^R(\boldsymbol{x})) \notag \\
&\qquad\qquad \cdot (-\log q_{i^*}^S(\boldsymbol{x})) \notag \\
&= \sum_{\boldsymbol{x} \in \mathcal{D}} \frac{p_{i^*}^R(\boldsymbol{x})}{q_{i^*}^S(\boldsymbol{x})} \cdot (-\log q_{i^*}^S(\boldsymbol{x})).
\end{align}

The KL divergence is defined as $\mathrm{KL}(p^R \| q^S) = \sum_{i=1}^{C} p_i^R \log(p_i^R / q_i^S) = -H(p^R) + H(p^R, q^S)$, where $H(p^R) = -\sum_{i} p_i^R \log p_i^R$ is the entropy of $p^R$ and $H(p^R, q^S) = -\sum_{i} p_i^R \log q_i^S$ is the cross-entropy between $p^R$ and $q^S$. The entropy term $H(p^R)$ is independent of the shared expert parameters. Under the hard-label approximation, the predicted probability of the routed experts concentrates near the ground-truth label $i^*$, i.e., $p_{i^*}^R \approx 1$. In this regime, the cross-entropy is dominated by $H(p^R, q^S) \approx -p_{i^*}^R \log q_{i^*}^S$.

Defining $a_S(\boldsymbol{x}) = \exp(\ell_S(\boldsymbol{x}) - \ell_R(\boldsymbol{x})) = p_{i^*}^R(\boldsymbol{x}) / q_{i^*}^S(\boldsymbol{x})$, the expected loss can be rewritten as
\begin{align}
\mathcal{E}_S &= \sum_{\boldsymbol{x} \in \mathcal{D}} a_S(\boldsymbol{x}) \cdot (-\log q_{i^*}^S(\boldsymbol{x})) \notag \\
&= \sum_{\boldsymbol{x} \in \mathcal{D}} a_S(\boldsymbol{x}) \cdot \mathrm{KL}(p^R(\boldsymbol{x}) \| q^S(\boldsymbol{x})) + C_S,
\end{align}
where $C_S = \sum_{\boldsymbol{x}} a_S(\boldsymbol{x}) \cdot H(p^R(\boldsymbol{x}))$ is independent of $q^S$. The derivation for the routed experts is symmetric.
\end{proof}

We now prove Theorem 3, which establishes the effect of mutual-guided collaboration between the shared expert and routed experts.

\begin{proof}
By Lemma~\ref{lem:weighted_kd}, the collaborative objective is equivalent to a weighted knowledge distillation loss. Taking the gradient with respect to the shared expert output $S(\boldsymbol{x})$ in function space, we have $\nabla_S \mathrm{KL}(p^R \| q^S) \propto -(R(\boldsymbol{x}) - S(\boldsymbol{x}))$, which gives $\nabla_S \mathcal{E}_S \propto -a_S(\boldsymbol{x}) \cdot (R(\boldsymbol{x}) - S(\boldsymbol{x}))$. The gradient descent updates are
\begin{align}
S^{(t+1)}(\boldsymbol{x}) &= S^{(t)}(\boldsymbol{x}) \notag \\
&\quad + \eta \cdot a_S(\boldsymbol{x}) \cdot (R^{(t)}(\boldsymbol{x}) - S^{(t)}(\boldsymbol{x})), \\
R^{(t+1)}(\boldsymbol{x}) &= R^{(t)}(\boldsymbol{x}) \notag \\
&\quad + \eta \cdot a_R(\boldsymbol{x}) \cdot (S^{(t)}(\boldsymbol{x}) - R^{(t)}(\boldsymbol{x})),
\end{align}
where $\eta > 0$ is the learning rate.

Define $\delta_S^{(t)}(\boldsymbol{x}) = S^{(t)}(\boldsymbol{x}) - g(\boldsymbol{x})$ and $\delta_R^{(t)}(\boldsymbol{x}) = R^{(t)}(\boldsymbol{x}) - g(\boldsymbol{x})$. Subtracting $g(\boldsymbol{x})$ from the update rule for the shared expert yields
\begin{align}
\delta_S^{(t+1)}(\boldsymbol{x}) \notag &= \delta_S^{(t)}(\boldsymbol{x}) + \eta \cdot a_S(\boldsymbol{x}) \cdot (R^{(t)}(\boldsymbol{x}) - S^{(t)}(\boldsymbol{x})) \notag \\
&= \delta_S^{(t)}(\boldsymbol{x}) + \eta \cdot a_S(\boldsymbol{x}) \cdot (\delta_R^{(t)}(\boldsymbol{x}) - \delta_S^{(t)}(\boldsymbol{x})) \notag \\
&= (1 - \eta \cdot a_S(\boldsymbol{x})) \cdot \delta_S^{(t)}(\boldsymbol{x}) \notag \\
&\quad + \eta \cdot a_S(\boldsymbol{x}) \cdot \delta_R^{(t)}(\boldsymbol{x}).
\end{align}
Similarly, $\delta_R^{(t+1)}(\boldsymbol{x}) = (1 - \eta \cdot a_R(\boldsymbol{x})) \cdot \delta_R^{(t)}(\boldsymbol{x}) + \eta \cdot a_R(\boldsymbol{x}) \cdot \delta_S^{(t)}(\boldsymbol{x})$.

Taking the squared norm and applying the triangle inequality, for $\eta$ sufficiently small such that $\eta \cdot a_S(\boldsymbol{x}) < 1$ and $\eta \cdot a_R(\boldsymbol{x}) < 1$, we have
\begin{align}
\|\delta_S^{(t+1)}(\boldsymbol{x})\| &\leq (1 - \eta \cdot a_S(\boldsymbol{x})) \|\delta_S^{(t)}(\boldsymbol{x})\| \notag \\
&\quad + \eta \cdot a_S(\boldsymbol{x}) \|\delta_R^{(t)}(\boldsymbol{x})\|, \\
\|\delta_R^{(t+1)}(\boldsymbol{x})\| &\leq (1 - \eta \cdot a_R(\boldsymbol{x})) \|\delta_R^{(t)}(\boldsymbol{x})\| \notag \\
&\quad + \eta \cdot a_R(\boldsymbol{x}) \|\delta_S^{(t)}(\boldsymbol{x})\|.
\end{align}

Consider the case $\ell_S(\boldsymbol{x}) > \ell_R(\boldsymbol{x})$. Then $a_S(\boldsymbol{x}) > 1$ and $a_R(\boldsymbol{x}) < 1$. By Assumption 4, $\|\delta_R^{(t)}(\boldsymbol{x})\| \leq \varepsilon$. Substituting this bound, we obtain $\|\delta_S^{(t+1)}(\boldsymbol{x})\| \leq (1 - \eta \cdot a_S(\boldsymbol{x})) \|\delta_S^{(t)}(\boldsymbol{x})\| + \eta \cdot a_S(\boldsymbol{x}) \cdot \varepsilon$. Since $a_S(\boldsymbol{x}) > a_R(\boldsymbol{x})$, the contraction rate for the shared expert exceeds the expansion rate for the routed experts. The case $\ell_R(\boldsymbol{x}) > \ell_S(\boldsymbol{x})$ is symmetric.

Define $\epsilon_S^{(t)}(\boldsymbol{x}) = \|\delta_S^{(t)}(\boldsymbol{x})\|^2$ and $\epsilon_R^{(t)}(\boldsymbol{x}) = \|\delta_R^{(t)}(\boldsymbol{x})\|^2$. Adding the squared-norm recursions and using the convexity of the squared norm, for each sample $\boldsymbol{x}$, the total error $\epsilon_S^{(t)}(\boldsymbol{x}) + \epsilon_R^{(t)}(\boldsymbol{x})$ is non-increasing in $t$. Taking the expectation over the dataset completes the proof.
\end{proof}

\newpage
\input{checklist.tex}

\end{document}

%% file: checklist.tex
\section*{NeurIPS Paper Checklist}

\begin{enumerate}

\item {\bf Claims}
    \item[] Question: Do the main claims made in the abstract and introduction accurately reflect the paper's contributions and scope?
    \item[] Answer: \answerYes{}
    \item[] Justification: The abstract and introduction clearly state three contributions: theoretical analysis of the reducibility cost, the SREA framework with shared and routed experts, and validation on synthetic datasets and seven benchmarks.
    \item[] Guidelines:
    \begin{itemize}
        \item The answer \answerNA{} means that the abstract and introduction do not include the claims made in the paper.
        \item The abstract and/or introduction should clearly state the claims made, including the contributions made in the paper and important assumptions and limitations. A \answerNo{} or \answerNA{} answer to this question will not be perceived well by the reviewers. 
        \item The claims made should match theoretical and experimental results, and reflect how much the results can be expected to generalize to other settings. 
        \item It is fine to include aspirational goals as motivation as long as it is clear that these goals are not attained by the paper. 
    \end{itemize}

\item {\bf Limitations}
    \item[] Question: Does the paper discuss the limitations of the work performed by the authors?
    \item[] Answer: \answerYes{}
    \item[] Justification: A dedicated Limitation subsection discusses the scope of the current framework (classification only) and identifies regression-based EEG decoding as an open direction. The experimental section also acknowledges that performance gains vary across tasks.
    \item[] Guidelines:
    \begin{itemize}
        \item The answer \answerNA{} means that the paper has no limitation while the answer \answerNo{} means that the paper has limitations, but those are not discussed in the paper. 
        \item The authors are encouraged to create a separate ``Limitations'' section in their paper.
        \item The paper should point out any strong assumptions and how robust the results are to violations of these assumptions (e.g., independence assumptions, noiseless settings, model well-specification, asymptotic approximations only holding locally). The authors should reflect on how these assumptions might be violated in practice and what the implications would be.
        \item The authors should reflect on the scope of the claims made, e.g., if the approach was only tested on a few datasets or with a few runs. In general, empirical results often depend on implicit assumptions, which should be articulated.
        \item The authors should reflect on the factors that influence the performance of the approach. For example, a facial recognition algorithm may perform poorly when image resolution is low or images are taken in low lighting. Or a speech-to-text system might not be used reliably to provide closed captions for online lectures because it fails to handle technical jargon.
        \item The authors should discuss the computational efficiency of the proposed algorithms and how they scale with dataset size.
        \item If applicable, the authors should discuss possible limitations of their approach to address problems of privacy and fairness.
        \item While the authors might fear that complete honesty about limitations might be used by reviewers as grounds for rejection, a worse outcome might be that reviewers discover limitations that aren't acknowledged in the paper. The authors should use their best judgment and recognize that individual actions in favor of transparency play an important role in developing norms that preserve the integrity of the community. Reviewers will be specifically instructed to not penalize honesty concerning limitations.
    \end{itemize}

\item {\bf Theory assumptions and proofs}
    \item[] Question: For each theoretical result, does the paper provide the full set of assumptions and a complete (and correct) proof?
    \item[] Answer: \answerYes{}
    \item[] Justification: Section~\ref{sec:preliminary} provides numbered definitions (Definitions~\ref{def:alignment}--\ref{def:branching}), assumptions (Assumptions~\ref{ass:distribution_shift}--\ref{ass:complementarity}), and theorems (Theorems~\ref{thm:backbone}--\ref{thm:error_descent}) with complete statements. All assumptions are explicitly stated and referenced in each theorem. Synthetic verification in Section~\ref{sec:synthetic_verification} empirically validates the theoretical findings.
    \item[] Guidelines:
    \begin{itemize}
        \item The answer \answerNA{} means that the paper does not include theoretical results. 
        \item All the theorems, formulas, and proofs in the paper should be numbered and cross-referenced.
        \item All assumptions should be clearly stated or referenced in the statement of any theorems.
        \item The proofs can either appear in the main paper or the supplemental material, but if they appear in the supplemental material, the authors are encouraged to provide a short proof sketch to provide intuition. 
        \item Inversely, any informal proof provided in the core of the paper should be complemented by formal proofs provided in appendix or supplemental material.
        \item Theorems and Lemmas that the proof relies upon should be properly referenced. 
    \end{itemize}

    \item {\bf Experimental result reproducibility}
    \item[] Question: Does the paper fully disclose all the information needed to reproduce the main experimental results of the paper to the extent that it affects the main claims and/or conclusions of the paper (regardless of whether the code and data are provided or not)?
    \item[] Answer: \answerYes{}
    \item[] Justification: Section~\ref{sec:experimental_setting} provides all training details including optimizer (Adam), learning rate (0.0001), batch size (256), number of epochs (100), weight decay (0.0005), early stopping patience (10), backbone architectures for each task, number of experts ($M=5$), masking probability ($\rho=10\%$), cross-validation protocols, and preprocessing steps.
    \item[] Guidelines:
    \begin{itemize}
        \item The answer \answerNA{} means that the paper does not include experiments.
        \item If the paper includes experiments, a \answerNo{} answer to this question will not be perceived well by the reviewers: Making the paper reproducible is important, regardless of whether the code and data are provided or not.
        \item If the contribution is a dataset and\slash or model, the authors should describe the steps taken to make their results reproducible or verifiable. 
        \item Depending on the contribution, reproducibility can be accomplished in various ways. For example, if the contribution is a novel architecture, describing the architecture fully might suffice, or if the contribution is a specific model and empirical evaluation, it may be necessary to either make it possible for others to replicate the model with the same dataset, or provide access to the model. In general. releasing code and data is often one good way to accomplish this, but reproducibility can also be provided via detailed instructions for how to replicate the results, access to a hosted model (e.g., in the case of a large language model), releasing of a model checkpoint, or other means that are appropriate to the research performed.
        \item While NeurIPS does not require releasing code, the conference does require all submissions to provide some reasonable avenue for reproducibility, which may depend on the nature of the contribution. For example
        \begin{enumerate}
            \item If the contribution is primarily a new algorithm, the paper should make it clear how to reproduce that algorithm.
            \item If the contribution is primarily a new model architecture, the paper should describe the architecture clearly and fully.
            \item If the contribution is a new model (e.g., a large language model), then there should either be a way to access this model for reproducing the results or a way to reproduce the model (e.g., with an open-source dataset or instructions for how to construct the dataset).
            \item We recognize that reproducibility may be tricky in some cases, in which case authors are welcome to describe the particular way they provide for reproducibility. In the case of closed-source models, it may be that access to the model is limited in some way (e.g., to registered users), but it should be possible for other researchers to have some path to reproducing or verifying the results.
        \end{enumerate}
    \end{itemize}

\item {\bf Open access to data and code}
    \item[] Question: Does the paper provide open access to the data and code, with sufficient instructions to faithfully reproduce the main experimental results, as described in supplemental material?
    \item[] Answer: \answerYes{}
    \item[] Justification: All seven datasets are publicly available and properly cited. Code is provided in the supplemental material.
    \item[] Guidelines:
    \begin{itemize}
        \item The answer \answerNA{} means that paper does not include experiments requiring code.
        \item Please see the NeurIPS code and data submission guidelines (\url{https://neurips.cc/public/guides/CodeSubmissionPolicy}) for more details.
        \item While we encourage the release of code and data, we understand that this might not be possible, so \answerNo{} is an acceptable answer. Papers cannot be rejected simply for not including code, unless this is central to the contribution (e.g., for a new open-source benchmark).
        \item The instructions should contain the exact command and environment needed to run to reproduce the results. See the NeurIPS code and data submission guidelines (\url{https://neurips.cc/public/guides/CodeSubmissionPolicy}) for more details.
        \item The authors should provide instructions on data access and preparation, including how to access the raw data, preprocessed data, intermediate data, and generated data, etc.
        \item The authors should provide scripts to reproduce all experimental results for the new proposed method and baselines. If only a subset of experiments are reproducible, they should state which ones are omitted from the script and why.
        \item At submission time, to preserve anonymity, the authors should release anonymized versions (if applicable).
        \item Providing as much information as possible in supplemental material (appended to the paper) is recommended, but including URLs to data and code is permitted.
    \end{itemize}

\item {\bf Experimental setting/details}
    \item[] Question: Does the paper specify all the training and test details (e.g., data splits, hyperparameters, how they were chosen, type of optimizer) necessary to understand the results?
    \item[] Answer: \answerYes{}
    \item[] Justification: Section~\ref{sec:experimental_setting} specifies all training and test details including data splits (leave-one-subject-out and 10-fold cross-validation), hyperparameters, optimizer, preprocessing, and backbone architectures for each task.
    \item[] Guidelines:
    \begin{itemize}
        \item The answer \answerNA{} means that the paper does not include experiments.
        \item The experimental setting should be presented in the core of the paper to a level of detail that is necessary to appreciate the results and make sense of them.
        \item The full details can be provided either with the code, in appendix, or as supplemental material.
    \end{itemize}

\item {\bf Experiment statistical significance}
    \item[] Question: Does the paper report error bars suitably and correctly defined or other appropriate information about the statistical significance of the experiments?
    \item[] Answer: \answerYes{}
    \item[] Justification: We report paired $t$-test significance ($p < 0.05$) for all baseline comparisons, average accuracy via leave-one-subject-out and $k$-fold cross-validation across all subjects, and per-subject accuracy in the case study (Figure~\ref{fig:case_study}).
    \item[] Guidelines:
    \begin{itemize}
        \item The answer \answerNA{} means that the paper does not include experiments.
        \item The authors should answer \answerYes{} if the results are accompanied by error bars, confidence intervals, or statistical significance tests, at least for the experiments that support the main claims of the paper.
        \item The factors of variability that the error bars are capturing should be clearly stated (for example, train/test split, initialization, random drawing of some parameter, or overall run with given experimental conditions).
        \item The method for calculating the error bars should be explained (closed form formula, call to a library function, bootstrap, etc.)
        \item The assumptions made should be given (e.g., Normally distributed errors).
        \item It should be clear whether the error bar is the standard deviation or the standard error of the mean.
        \item It is OK to report 1-sigma error bars, but one should state it. The authors should preferably report a 2-sigma error bar than state that they have a 96\% CI, if the hypothesis of Normality of errors is not verified.
        \item For asymmetric distributions, the authors should be careful not to show in tables or figures symmetric error bars that would yield results that are out of range (e.g., negative error rates).
        \item If error bars are reported in tables or plots, the authors should explain in the text how they were calculated and reference the corresponding figures or tables in the text.
    \end{itemize}

\item {\bf Experiments compute resources}
    \item[] Question: For each experiment, does the paper provide sufficient information on the computer resources (type of compute workers, memory, time of execution) needed to reproduce the experiments?
    \item[] Answer: \answerYes{}
    \item[] Justification: Section~\ref{sec:experimental_setting} provides the experimental setup details. Additional compute resource information is provided in the supplemental material.
    \item[] Guidelines:
    \begin{itemize}
        \item The answer \answerNA{} means that the paper does not include experiments.
        \item The paper should indicate the type of compute workers CPU or GPU, internal cluster, or cloud provider, including relevant memory and storage.
        \item The paper should provide the amount of compute required for each of the individual experimental runs as well as estimate the total compute. 
        \item The paper should disclose whether the full research project required more compute than the experiments reported in the paper (e.g., preliminary or failed experiments that didn't make it into the paper). 
    \end{itemize}
    
\item {\bf Code of ethics}
    \item[] Question: Does the research conducted in the paper conform, in every respect, with the NeurIPS Code of Ethics \url{https://neurips.cc/public/EthicsGuidelines}?
    \item[] Answer: \answerYes{}
    \item[] Justification: The research conforms with the NeurIPS Code of Ethics. All datasets are publicly available and properly cited. The work aims to improve EEG-based brain-computer interfaces for beneficial applications.
    \item[] Guidelines:
    \begin{itemize}
        \item The answer \answerNA{} means that the authors have not reviewed the NeurIPS Code of Ethics.
        \item If the authors answer \answerNo, they should explain the special circumstances that require a deviation from the Code of Ethics.
        \item The authors should make sure to preserve anonymity (e.g., if there is a special consideration due to laws or regulations in their jurisdiction).
    \end{itemize}

\item {\bf Broader impacts}
    \item[] Question: Does the paper discuss both potential positive societal impacts and negative societal impacts of the work performed?
    \item[] Answer: \answerYes{}
    \item[] Justification: The work on cross-subject EEG decoding has potential positive impacts for brain-computer interfaces and neurological diagnosis. Broader impact discussion is provided in the supplemental material.
    \item[] Guidelines:
    \begin{itemize}
        \item The answer \answerNA{} means that there is no societal impact of the work performed.
        \item If the authors answer \answerNA{} or \answerNo, they should explain why their work has no societal impact or why the paper does not address societal impact.
        \item Examples of negative societal impacts include potential malicious or unintended uses (e.g., disinformation, generating fake profiles, surveillance), fairness considerations (e.g., deployment of technologies that could make decisions that unfairly impact specific groups), privacy considerations, and security considerations.
        \item The conference expects that many papers will be foundational research and not tied to particular applications, let alone deployments. However, if there is a direct path to any negative applications, the authors should point it out. For example, it is legitimate to point out that an improvement in the quality of generative models could be used to generate Deepfakes for disinformation. On the other hand, it is not needed to point out that a generic algorithm for optimizing neural networks could enable people to train models that generate Deepfakes faster.
        \item The authors should consider possible harms that could arise when the technology is being used as intended and functioning correctly, harms that could arise when the technology is being used as intended but gives incorrect results, and harms following from (intentional or unintentional) misuse of the technology.
        \item If there are negative societal impacts, the authors could also discuss possible mitigation strategies (e.g., gated release of models, providing defenses in addition to attacks, mechanisms for monitoring misuse, mechanisms to monitor how a system learns from feedback over time, improving the efficiency and accessibility of ML).
    \end{itemize}
    
\item {\bf Safeguards}
    \item[] Question: Does the paper describe safeguards that have been put in place for responsible release of data or models that have a high risk for misuse (e.g., pre-trained language models, image generators, or scraped datasets)?
    \item[] Answer: \answerNA{}
    \item[] Justification: The paper proposes an EEG classification framework that does not pose high risk for misuse. No pre-trained language models, image generators, or scraped datasets are released.
    \item[] Guidelines:
    \begin{itemize}
        \item The answer \answerNA{} means that the paper poses no such risks.
        \item Released models that have a high risk for misuse or dual-use should be released with necessary safeguards to allow for controlled use of the model, for example by requiring that users adhere to usage guidelines or restrictions to access the model or implementing safety filters. 
        \item Datasets that have been scraped from the Internet could pose safety risks. The authors should describe how they avoided releasing unsafe images.
        \item We recognize that providing effective safeguards is challenging, and many papers do not require this, but we encourage authors to take this into account and make a best faith effort.
    \end{itemize}

\item {\bf Licenses for existing assets}
    \item[] Question: Are the creators or original owners of assets (e.g., code, data, models), used in the paper, properly credited and are the license and terms of use explicitly mentioned and properly respected?
    \item[] Answer: \answerYes{}
    \item[] Justification: All seven datasets are publicly available and properly cited with their original publications in Section~\ref{sec:experimental_setting}. All baseline methods are cited with their original papers.
    \item[] Guidelines:
    \begin{itemize}
        \item The answer \answerNA{} means that the paper does not use existing assets.
        \item The authors should cite the original paper that produced the code package or dataset.
        \item The authors should state which version of the asset is used and, if possible, include a URL.
        \item The name of the license (e.g., CC-BY 4.0) should be included for each asset.
        \item For scraped data from a particular source (e.g., website), the copyright and terms of service of that source should be provided.
        \item If assets are released, the license, copyright information, and terms of use in the package should be provided. For popular datasets, \url{paperswithcode.com/datasets} has curated licenses for some datasets. Their licensing guide can help determine the license of a dataset.
        \item For existing datasets that are re-packaged, both the original license and the license of the derived asset (if it has changed) should be provided.
        \item If this information is not available online, the authors are encouraged to reach out to the asset's creators.
    \end{itemize}

\item {\bf New assets}
    \item[] Question: Are new assets introduced in the paper well documented and is the documentation provided alongside the assets?
    \item[] Answer: \answerYes{}
    \item[] Justification: Code and documentation are provided in the supplemental material with instructions for reproduction.
    \item[] Guidelines:
    \begin{itemize}
        \item The answer \answerNA{} means that the paper does not release new assets.
        \item Researchers should communicate the details of the dataset\slash code\slash model as part of their submissions via structured templates. This includes details about training, license, limitations, etc. 
        \item The paper should discuss whether and how consent was obtained from people whose asset is used.
        \item At submission time, remember to anonymize your assets (if applicable). You can either create an anonymized URL or include an anonymized zip file.
    \end{itemize}

\item {\bf Crowdsourcing and research with human subjects}
    \item[] Question: For crowdsourcing experiments and research with human subjects, does the paper include the full text of instructions given to participants and screenshots, if applicable, as well as details about compensation (if any)? 
    \item[] Answer: \answerNA{}
    \item[] Justification: The paper does not involve crowdsourcing or new research with human subjects. All EEG datasets used are previously published and publicly available.
    \item[] Guidelines:
    \begin{itemize}
        \item The answer \answerNA{} means that the paper does not involve crowdsourcing nor research with human subjects.
        \item Including this information in the supplemental material is fine, but if the main contribution of the paper involves human subjects, then as much detail as possible should be included in the main paper. 
        \item According to the NeurIPS Code of Ethics, workers involved in data collection, curation, or other labor should be paid at least the minimum wage in the country of the data collector. 
    \end{itemize}

\item {\bf Institutional review board (IRB) approvals or equivalent for research with human subjects}
    \item[] Question: Does the paper describe potential risks incurred by study participants, whether such risks were disclosed to the subjects, and whether Institutional Review Board (IRB) approvals (or an equivalent approval/review based on the requirements of your country or institution) were obtained?
    \item[] Answer: \answerYes{}
    \item[] Justification: All EEG datasets were collected in prior studies with appropriate ethical approvals, as described in the original dataset publications cited in Section~\ref{sec:experimental_setting}.
    \item[] Guidelines:
    \begin{itemize}
        \item The answer \answerNA{} means that the paper does not involve crowdsourcing nor research with human subjects.
        \item Depending on the country in which research is conducted, IRB approval (or equivalent) may be required for any human subjects research. If you obtained IRB approval, you should clearly state this in the paper. 
        \item We recognize that the procedures for this may vary significantly between institutions and locations, and we expect authors to adhere to the NeurIPS Code of Ethics and the guidelines for their institution. 
        \item For initial submissions, do not include any information that would break anonymity (if applicable), such as the institution conducting the review.
    \end{itemize}

\item {\bf Declaration of LLM usage}
    \item[] Question: Does the paper describe the usage of LLMs if it is an important, original, or non-standard component of the core methods in this research? Note that if the LLM is used only for writing, editing, or formatting purposes and does \emph{not} impact the core methodology, scientific rigor, or originality of the research, declaration is not required.
    \item[] Answer: \answerNA{}
    \item[] Justification: The core method development does not involve LLMs as any important, original, or non-standard components.
    \item[] Guidelines:
    \begin{itemize}
        \item The answer \answerNA{} means that the core method development in this research does not involve LLMs as any important, original, or non-standard components.
        \item Please refer to our LLM policy in the NeurIPS handbook for what should or should not be described.
    \end{itemize}

\end{enumerate}